\newtheorem{theorem}{Theorem}
\newtheorem{prop}{Proposition}
\newtheorem{definition}{Definition}
\newcommand\blfootnote[1]{%
  \begingroup
  \renewcommand\thefootnote{}\footnote{#1}%
  \addtocounter{footnote}{-1}%
  \endgroup
}
\begin{document}
\title{Computing the probability of gene trees concordant with the species tree in the multispecies coalescent}


\author{Jakub Truszkowski$^{a,d}$, Celine Scornavacca$^{b,c}$ and Fabio Pardi$^{a,c}$}
\maketitle
\blfootnote{ \hspace{-0.7cm} 
$^a$ LIRMM, CNRS, Universit\'e Montpellier, Montpellier, France \\
$^b$ ISEM, CNRS, Universit\'e Montpellier, Montpellier, France \\
$^c$ Institut de Biologie Computationnelle, Montpellier, France \\
$^d$ Present address: RBC Borealis AI, Waterloo, Ontario, Canada}
\doublespace

\begin{abstract}

The multispecies coalescent process models the genealogical relationships of genes sampled from several species,
enabling useful predictions about phenomena such as the discordance between a gene tree and the species phylogeny due to incomplete lineage sorting.
Conversely, knowledge of large collections of gene trees can inform us about several aspects of the species phylogeny, 
such as its topology and ancestral population sizes. 
A fundamental open problem in this context is how to efficiently compute the probability of a gene tree topology, given the species phylogeny. 
Although a number of algorithms for this task have been proposed, they either produce approximate results, or, when they are exact, 
they do not scale to large data sets.
In this paper, we present some progress towards exact and efficient computation of the probability of a gene tree topology.
We provide a new algorithm that, given a species tree and the number of genes sampled for each species, calculates 
the probability that the gene tree topology will be concordant with the species tree.
Moreover, we provide an algorithm that computes the probability of any specific gene tree topology concordant with the species tree.
Both algorithms run in polynomial time and have been implemented in Python.
Experiments show that they are able to analyse data sets 
where thousands of genes are sampled in a matter of minutes to hours.

{\bf Keywords:} multispecies coalescent, gene tree, species tree, coalescent, dynamic programming, incomplete lineage sorting
\end{abstract}

\section{Introduction} 

Many phenomena may cause a phylogeny for a collection of genes sampled across different species (the {\it gene tree})  to be discordant with the phylogeny of those species (the {\it species tree}) \cite{maddison1997gene}. Sources of discordance include biological processes that cause genes to cross species boundaries, such as horizontal transfer and hybridization, and reconstruction artefacts, such as tree estimation error and the inclusion of paralogous genes. Another crucial cause for this discordance is the randomness in the process of gene lineage coalescence, which may result in gene lineages joining further in the past than the last common ancestor of the species from which they originate ({\it incomplete lineage sorting}).

%

The {\it multispecies coalescent} is the extension of the coalescent model that describes the random process of gene lineages merging within multiple species or populations related by a species tree \cite{hudson1983testing,pamilo1988relationships,takahata1989gene,degnan2009gene}. While tracing gene lineages backwards in time, each species ``acquires'' a number of lineages from its direct descendant species, and ``passes on'' to its direct ancestor the lineages that reach the origin of that species (see Figure \ref{fig-rch-pcoal}).
The process of coalescence inside each species may be dependent on parameters such as the species' effective population size and the ages of speciation events. 
If one is only interested in the branching pattern (topology) of the gene lineages, the only parameters that matter are the branch lengths of the species tree in coalescent units. 

The present paper deals with two key computational challenges arising in the context of the multispecies coalescent. The first is related to the issue of discordance between gene trees and species trees discussed above: 
given a species tree, \emph{how can we compute the probability of observing a gene tree that is discordant} (or, equivalently, concordant) \emph{with that species tree?}
{In the special case where just one lineage is sampled from each leaf of the species
tree, analytical formulae for species trees with up to 5 taxa were {obtained} early on
\cite{hudson1983testing, nei1986stochastic,pamilo1988relationships}, and an algorithm by Degnan and Salter \cite{DegnanSalter} can also be used 
to compute this probability, but {this algorithm} relies on the enumeration of a potentially huge number of scenarios, thus limiting its applicability (more on this below).
As for the general case where multiple lineages can be sampled, a number of alternative definitions of concordance have been proposed \cite{pamilo1988relationships, takahata1989gene, RosenbergAgreement}.} 
In this paper, we adopt the definition of {\it monophyletic concordance}, which 
is the only definition that solely depends on the topology of the gene tree. 
It holds when
all the gene lineages sampled from any 
species form a distinct clade in the gene tree, and the subtree connecting the roots of these clades has the same topology as the species tree. 
In Section \ref{ssec:concordance} we take a closer look at this and other definitions of gene tree/species tree concordance. 
Our main contribution to this first challenge is a general algorithm to compute the probability, under the multispecies coalescent, of generating a gene tree that is monophyletically concordant with a given species tree. 
The running time of our algorithm scales polynomially with the size of the species tree and with the number of sampled lineages.


The second challenge that we tackle in this paper is the following: \emph{how can we efficiently compute the probability of a specific gene tree topology, given a species tree with branch lengths?} This problem has received a lot of attention in the literature, especially since gene tree probability calculations form the basis for some recent methods of species tree inference based on the multispecies coalescent, such as MP-EST \cite{liu2010mp-est} and STELLS \cite{STELLS,pei2017stells2}.

While early works provided solutions for species trees with fewer than five leaves \cite{takahata1985gene,pamilo1988relationships,RosenbergAgreement}, the first general algorithm for this task was given by Degnan and Salter \cite{DegnanSalter}. Their algorithm is based on the concept of {a} {\it coalescent history}, which specifies, for each coalescence in the gene tree, the branch in the species tree where that coalescence occurs. In order to get the exact gene tree probability, all viable coalescent histories of the gene tree must be enumerated. 
Since {their number can grow super-exponentially in the size of the input trees \cite{disanto2015coalescent},}
this algorithm rapidly becomes inapplicable for just a few species and genes. 
{An extension of this algorithm, which can also deal with species phylogenies accounting for hybridization or other reticulate events, 
is available within the \textsc{PhyloNet} package \cite{yu2012probability}.}

A number of improvements over th{e algorithm by Degnan and Salter} were subsequently proposed by Y.~Wu and colleagues \cite{STELLS,CompactCH,pei2017stells2}. The program STELLS \cite{STELLS} is based on a dynamic programming algorithm to calculate the probability of gene trees which relies on the concept of {an} {\it ancestral configuration}, which specifies the set of gene lineages (i.e.~branches of the gene tree) passed by a species to its direct ancestor. Again, enumeration of all possible ancestral configurations is necessary, 
and their number often grows exponentially in the sizes of the input trees.
Recent research has shown that, in fact, when the species {tree} and gene tree are identical, the number of ancestral configurations grows exponentially 
both for balanced and unbalanced families of trees \cite{disanto2017enumeration}. Exponential growth also holds in expectation when the species/gene
tree is drawn uniformly at random, and even holds when only counting ancestral configurations that are non-equivalent for the purpose of the computation
\cite{disanto2019number}. 
In general, STELLS should not be expected to run in polynomial time in the size of its inputs.

{Recently, Y.~Wu proposed another} algorithm (\textsc{CompactCH}; \cite{CompactCH}) based on {\it compact coalescent histories}, which specify for each species the number of gene lineages passed by the species to its ancestor (or, equivalently, the number of coalescences occurring in that species). Only numbers of gene lineages (or coalescences) are given, but not their identity. 
Since the number of compact histories depends exponentially on the size of the species tree, but not on that of the gene tree, a good aspect of this approach is that if the size of the species tree is fixed to a constant, then the algorithm runs in polynomial time in the size of the gene tree \cite{CompactCH}. However, in practice this approach is only feasible for very small species trees.

Finally, the latest version of STELLS (STELLS2; \cite{pei2017stells2}) relies on an algorithm that calculates an approximate value of the gene tree probability. Although the approximation can give probabilities that are many orders of magnitude smaller than the correct probabilities (and there is no guaranteed maximum ratio between them) simulations show a good correlation between these probabilities, and most importantly the approximation does not seem to affect too negatively the accuracy of the species tree estimation carried out by STELLS2, relative to STELLS \cite{pei2017stells2}. The main advantage of STELLS2 is that its gene tree probability algorithm is much faster than the original STELLS algorithm, and in fact it can be shown to run in polynomial time whenever the gene tree is monophyletically concordant with the species tree.

Our contribution to the problem of computing gene tree probabilities in the multispecies coalescent is a polynomial-time, exact algorithm (that is, unlike STELLS2, not involving any approximation) that computes the probability of any specific gene tree topology that is monophyletically concordant with a given species tree. We provide a full running time analysis of this algorithm showing its efficiency, and confirm on simulated data sets that our Python implementation is much faster than the other available implementations of exact algorithms (STELLS and \textsc{CompactCH}, written in C++).

We note that the gene tree probability problem that we consider here is related but very different from that of calculating the probability density of a gene tree with branch lengths, given a species tree with scaled population sizes and species divergence times, which can be expressed analytically and computed very efficiently \cite{rannala2003bayes}. The difference between the two problems lies in the fact that the branch lengths in the gene tree provide a unique localization of coalescent events in the species tree, and therefore avoid the inconvenience of having to take into account all possible alternative localizations. In practice, divergence times in a gene tree are often highly uncertain due to limited amount of sequence data and the presence of reticulate events. Moreover, estimation of divergence times requires researchers to make assumptions on the relationship between per-branch mutation rates and calendar time, such as the molecular clock, which may not be realistic. Inference approaches that do not require branch lengths allow researchers to circumvent these problems. 

Another possible way to avoid relying on branch lengths is to consider gene tree topologies with ranked internal nodes. Their probability, given a species tree, can be computed in polynomial time \cite{degnan2012probability,stadler2012polynomial}, but again the inference must deal with the high uncertainty in the ranking.

The approach we present here is based on the fact that assuming monophyletic concordance 
allows us to decompose the computation of the probability of a gene tree topology given a 
species tree, into computations on pairs of corresponding subtrees of the gene tree and the species tree.
This is possible because, due to monophyletic concordance, the only coalescences 
that can occur on a particular subtree of the species tree are those that are part 
of a corresponding subtree of the gene tree.
Additionally, we use this idea to compute the probability of monophyletic concordance 
in general for a given species tree, by summing over all the monophyletically concordant gene trees, 
while also preserving the recursive structure of the algorithm. 
Our approach bears some resemblance to the recent work of Mehta {\it et al.}~\cite{mehta2016probability}, who devised an algorithm for the problem of computing the probability of monophyly of two groups of genes given a species tree.

%


\section{Preliminaries}

\subsection{The coalescent process}

The coalescent is a stochastic process that models the shared genealogy of a random sample of individuals from a population~\cite{hein2004gene}. Lineages are traced back in time starting from the present. Each coalescence event merges a randomly chosen pair of lineages into one, which corresponds to the shared ancestor of the coalesced lineages. The waiting time until the next coalescence event follows an exponential distribution with rate ${k \choose 2}$ where $k$ is the number of distinct lineages. Consequently, the unit of time in the coalescent is the expected time for two lineages to coalesce. 
For a thorough treatment of the coalescent and its applications, see Hein, Schierup and Wiuf~\cite{hein2004gene}.

\subsection{Gene trees and species trees}

First, some general notation on trees: we write $V(T)$ to denote the set of nodes of tree $T$ and we let $\mathcal{L}(T)$ and $I(T)$ denote respectively the leaves and internal nodes of $T$. For a set of nodes $X \subseteq V(T)$, we write $lca_T(X)$ for the lowest (i.e., most recent) common ancestor of all nodes in $X$ in $T$. For a node $v \in V(T)$, we write $T_v$ to denote the subtree of $T$ induced by $v$ and all of its descendants. 

A {\it species tree} $S$ is a rooted binary tree, whose nodes we take to represent species---either those from which genes are sampled, or their ancestors. Each species $s\in V(S)$ has a length $\ell_s$ representing a measure of the lifespan of $s$, in coalescent units. Note that in other works, species are identified with the branches of $S$, but the two approaches are equivalent, as each node uniquely identifies the branch above it.
A {\it gene tree} $G$ is a rooted binary tree, whose leaves represent the sampled gene lineages, and whose internal nodes represent their past coalescences.

For each leaf $v$ of $G$, $s(v)$ denotes the leaf of $S$ from which $v$ was sampled. For every internal node $v$ of $G$, we define 
\[
sm(v)=lca_S \left(\left\{ s(u): u \in \mathcal{L}(G_v) \right\}\right)
\]
Informally, $sm(v)$ is the lowest species in $S$ where coalescence $v$ can happen {(see Figure~\ref{fig-rch-pcoal}a))}. Finally, we write $Up(v)$ for the set of all ancestors of vertex $v$ and $Down(v)$ for the set of all its descendants. We adopt the convention that every vertex is its own ancestor and descendant.

A {\it coalescent history} for a gene tree $G$ and species tree $S$ is a mapping $\sigma: I(G) \rightarrow V(S)$ such that $\sigma(v)$ is an ancestor of $\sigma(u)$ whenever $v$ is an ancestor of $u$ in $G$ and $\sigma(v) \in Up(sm(v))$. 
This definition is equivalent to the definition of \emph{valid coalescent history} given 
by Degnan and Salter~\cite{DegnanSalter}. We omit ``valid'' for simplicity.
In short, 
a coalescent history identifies the species 
where each coalescence of $G$ happens. 

Given a coalescent history $\sigma$, a  {\it partial coalescent} for species $s$ and $\sigma$ is a one-to-one mapping 
$\rho_s: \sigma^{-1}(s) \rightarrow \left\{ 1,\ldots, |\sigma^{-1}(s)|\right\}$ such that $\rho_s(v)>\rho_s(u)$ 
whenever $v$ is an ancestor of $u$ in $G$. 
That is, every coalescence $v$ mapped to $s$ by the given coalescent history receives an integer rank $\rho_s(v)$ 
that is consistent with the temporal constraints specified by $G$. 
In other words, the partial coalescent specifies the order of coalescences within species $s$; see Figure~\ref{fig-rch-pcoal}b). 

A \emph{ranked coalescent history} for species tree $S$ and gene tree $G$ is a mapping $f: I(G) \rightarrow V(S) \times \mathbb{N}$ defined as $f(v)=(\sigma(v),\rho_{\sigma(v)}(v))$, where $\sigma$ is a coalescent history for $S$ and $G$, and $\rho_s$ is a partial coalescent for $s$ and $\sigma$. A ranked coalescent history can also be defined as a coalescent history together with a partial coalescent for each species in $S$. In short, it identifies the species in which each coalescence occurred, as well as 
the rank within the species given to each coalescence;
see Figure~\ref{fig-rch-pcoal}a). Note that, unlike related concepts (namely \emph{ranked histories} \cite{degnan2012probability}), 
a ranked coalescent history does not specify the relative order of coalescences in different species that exist at the same time.


\begin{figure}
\includegraphics[height=2.2in]{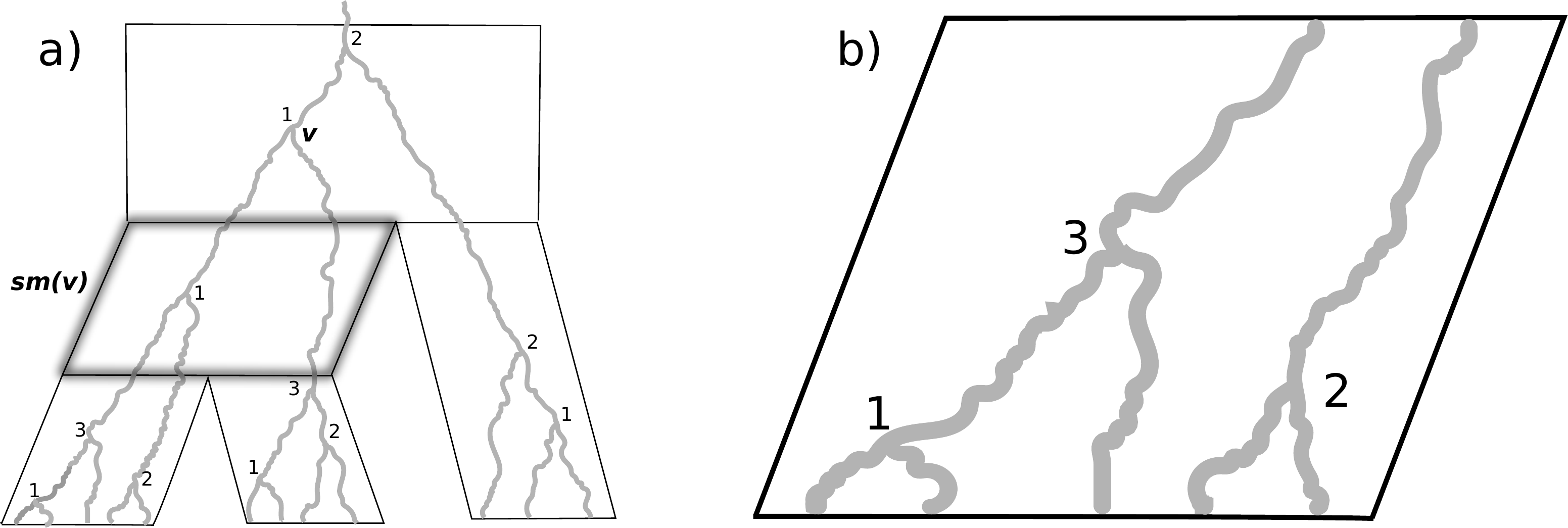}
\caption{a) A ranked coalescent history (grey) within a species tree (black). Within each species (a parallelogram), the coalescent events are numbered from the most recent to the most ancient. For any node $v$ in the gene tree, $sm(v)$ denotes the lowest species where coalescence $v$ can happen. b) A partial coalescent with five lineages at the bottom and two at the top.} 
\label{fig-rch-pcoal} 
\end{figure}

Let $R(S,G)$ be the set of possible ranked coalescent histories for $S$ and $G$. Furthermore, let $R(S,G,k)$ be the set of ranked coalescent histories such that exactly $k$ coalescences of $G$ happen in the root of $S$. 

\subsection{Partial coalescent probabilities}

The probability of a partial coalescent of $s$ depends only on the number of lineages at the bottom and at the top of $s$, as well as the branch length $\ell_s$. It does not depend on the topology of the partial coalescent since, at any point in time, every pair of currently extant lineages is equally likely to be the next one to coalesce. The probability that $n_d$ lineages at the bottom of species $s$ will have $n_u$ ancestors at the top is given by the following well-known formula (see e.g.~\cite{tavare1984line,RosenbergAgreement,DegnanSalter}):
\begin{equation}
p_{n_d,n_u}(\ell_s)=\sum_{k=n_u}^{n_d} e^{-k(k-1)\ell_s/2} \frac{(2k-1)(-1)^{k-n_u}}{n_u!(k-n_u)!(n_u+k-1)} \times \prod_{y=0}^{k-1} \frac{(n_u+y)(n_d-y)}{n_d+y} 
\label{eq-probcoal}
\end{equation}
In practice, we have observed that the above formula is numerically unstable for small values of $\ell_s$. To compute $p_{n_d,n_u}$, we treat the coalescent as a continuous-time Markov process~\cite{TavarePersComm} with an $m \times m$ rate matrix


\[\begin{blockarray}{cccccccc}
\begin{block}{c[cccccc]c}
    &0             &             0  &               0 & 0 & \dots & 0 & {\scriptstyle 1}\\
    &{2 \choose 2} & -{2 \choose 2} &               0 & 0 & \dots & 0 & {\scriptstyle 2}\\
A=	&0             & {3 \choose 2}  & - {3 \choose 2} & 0 & \dots & 0 & {\scriptstyle 3}\\
	& \dots        & \dots          &  \dots      & \dots & \dots & \dots & {\scriptstyle \dots}\\
    &0             &             0  &           \dots & 0 & {m \choose 2} & - {m \choose 2} & {\scriptstyle m}\\	
\end{block}
    &_1 & _2 & _3 & _{\dots} & _{m-1} & _{m} \\
\end{blockarray}\]
where $A_{ij}$ is the transition rate from $i$ to $j$ lineages and $m$ is the number of leaves in the gene tree. Since time is measured in coalescent units, the rate of coalescence from $k$ to $k-1$ lineages is equal to ${k \choose 2}$~\cite{hein2004gene}. The {probability} of transitioning from $n_d$ to $n_u$ lineages within a species of duration $\ell_s$ {is} then obtained by taking the exponent of the rate matrix multiplied by the branch length:
\begin{equation}
p_{n_d,n_u}(\ell_s)=e^{\ell_sA}_{n_dn_u}
\label{eqn:mat_exp}
\end{equation}

The number of partial coalescents with $n_d$ lineages at the bottom and $n_u$ lineages at the top is easily calculated as
\[
w_{n_d,n_u}=\frac{n_d(n_d-1)}{2}\cdot \frac{(n_d-1)(n_d-2)}{2} \cdot \ldots \cdot \frac{(n_u+1)n_u}{2} = \frac{n_d!(n_d-1)!}{n_u!(n_u-1)!} \cdot 2^{-(n_d-n_u)}
\]
Each of these partial coalescents has equal probability, so the probability $L(n_d,n_u,\ell)$ of a specific partial coalescent is simply
\[
L(n_d,n_u,\ell)=p_{n_d,n_u}(\ell)/w_{n_d,n_u}
\]
Conditional on the number of lineages at the bottom and at the top of each species, the coalescent events within each species are independent of those in any other species. Thus, we can write down the probability of a ranked coalescent history $h$ as a product of terms corresponding to each partial coalescent:
\[
\Pr[h|S]=\prod_{s \in V(S)} L(n_{d}(h,s),n_{u}(h,s),\ell_s)
\]
where {$n_{d}(h,s)$ and $n_{u}(h,s)$} are the numbers of lineages at the bottom and top of species $s$ under ranked coalescent history $h$, respectively. Consequently, the probability of a gene tree can be written down as
\[
\Pr[G|S]=\sum_{h \in R(S,G)} \prod_{s \in V(S)} L(n_{d}(h,s),n_{u}(h,s),\ell_s)
\]

Note that whereas Degnan and Salter~\cite{DegnanSalter} decompose $\Pr[G|S]$ into the sum of the probabilities of {\it unranked} coalescent histories for $S$ and $G$, we decompose it into the sum of the probabilities of \emph{ranked} coalescent histories.
Unlike the approach of Degnan and Salter, our formulae do not treat the root species $r$ of $S$ any differently from all other species, 
allowing for the possibility that there remain more than one lineage at the top of $r$, and that no gene tree is generated by the coalescent process. 
In order to guarantee that all lineages eventually coalesce, it suffices to set $\ell_r=+\infty$.

\subsection{Concordance between the gene tree and the species tree \label{ssec:concordance}}


{When} exactly one gene is sampled from each species, concordance between a gene tree $G$ and a species tree $S$ 
means simply that $G$ and $S$ have the same topology. For multiple 
genes per species, defining concordance 
is non-trivial and several different definitions have been proposed. Rosenberg~\cite{RosenbergAgreement} introduced the concept of the {\it collapsed gene tree}, where two species $s_1$ and $s_2$ are considered siblings if the most recent interspecific coalescence in $G$ involves lineages from $s_1$ and $s_2$. The remaining lineages of $s_1$ and $s_2$ are then considered to belong to their shared parent species 
and subsequent nodes of the collapsed gene tree are defined analogously until no interspecific coalescences remain in $G$. Rosenberg then calls $G$ {\it topologically concordant} with the species tree if the collapsed gene tree has the same topology as the species tree. Takahata~\cite{takahata1985gene} proposes a stricter definition of concordance with an additional requirement that the interspecific coalescences in the collapsed gene tree occur in the most recent common ancestral population of the coalescing lineages. 
We refer to Rosenberg's paper  \cite{RosenbergAgreement} for more detailed definitions and discussion.

Both of the above notions of concordance assume the knowledge of the relative time ordering of coalescence events in $G$. Verifying Takahata's concordance between $G$ and $S$ additionally requires knowing the population in which every coalescence in $G$ occurred. This can be a problem for many data sets where divergence time estimates are unreliable. 
Figure \ref{fig-monoconc} illustrates these notions, and Figures \ref{fig-monoconc}b) and \ref{fig-monoconc}c) show an example of a gene tree $G$ with topology 
$\big(\big((A,A),B\big), \big((B,C),(B,C)\big)\big)$ whose concordance with the species tree $((A,B),C)$ (in the sense of Rosenberg or that of Takahata)
depends on a specific ranked coalescent history for $G$ and $S$.


In this paper, we focus on a relatively stringent notion of concordance 
known as {\it monophyletic concordance}, first defined by Rosenberg~\cite{RosenbergAgreement}, which, conveniently, only requires the knowledge of the 
topologies of $G$ and $S$. See Figure~\ref{fig-monoconc}a) for an example.

\begin{definition}
Let $genes(s)$ be the set of leaves in $G$ sampled from species $s$. A gene tree $G$ is monophyletically concordant with $S$ if:
\begin{enumerate} 
\item $genes(s)$ is monophyletic for each leaf species $s$
\item The subtree of $G$ obtained by removing all nodes descendant from $v_s=lca_G(genes(s))$ and labeling 
$v_s$ with $s$, for all leaf species $s$, has the same topology as $S$.  
\end{enumerate}
\end{definition}

\begin{figure}[t]
\includegraphics[scale=0.9]{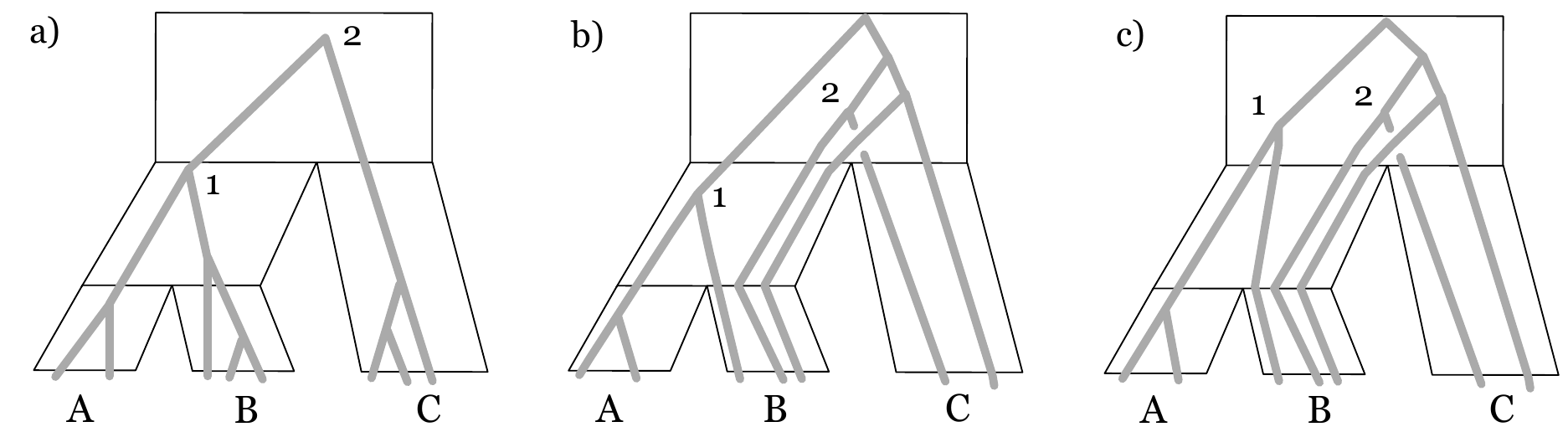}
\caption{Concordance relationships between the gene tree (grey) and the species tree (black parallelograms). 
Only the gene tree in a) is monophyletically concordant with the species tree. 
All three gene trees satisfy Rosenberg's definition of topological concordance, as the most recent interspecific coalescence (labelled 1) is between lineages from A and B, agreeing with the species tree. The gene trees in a) and b) are Takahata-concordant, but not the gene tree in c), as coalescence 1 does not occur in the species directly ancestral to A and B. If we were to invert the relative order of coalescences 1 and 2 in the gene tree in c), then it would not be topologically concordant with the species tree in the sense of Rosenberg. 
We refer to Rosenberg's paper \cite{RosenbergAgreement} for precise definitions of the three notions of concordance.
} 
\label{fig-monoconc}
\end{figure}

If $G$ is monophyletically concordant with $S$, every internal node $s$ in $I(S)$ has a unique node $v \in I(G)$ such that $sm(v)=s$. We write $sm^{-1}(s)$ to denote that unique node. For any leaf node $l$ of $S$, the set of nodes such that $sm(v)=l$ induces a subtree in $G$. We extend the notation above and write $sm^{-1}(l)$ to denote the root of that subtree. Informally, $sm^{-1}(s)$ denotes the most ancient node of $G$ that may have existed in species $s$.




\section{The algorithms}

In the following subsections, we first describe a dynamic programming algorithm for computing the probability of any 
specific gene tree $G$ that is monophyletically concordant with species tree $S$. {
The main idea here is a novel recursion that allows {us} to derive the total probability 
of all possible ranked coalescent histories for species tree $S_s$ and with exactly $k$ coalescences occurring in species $s$,
from corresponding probabilities for the two subtrees $S_{s_1}$ and $S_{s_2}$,
where $s_1$ and $s_2$ are the children of node $s$.
An important observation is that, although our ideas rely on the concept of ranked coalescent histories, 
at no point {does} our algorithm {require} their enumeration. This 
differentiates our approach from previous algorithms for this problem, which {rely} on the 
enumeration of some coalescent scenarios (e.g.~coalescent histories \cite{DegnanSalter}, ancestral configurations 
\cite{STELLS} or compact coalescent histories \cite{CompactCH}).}

{After showing how to deal with the base case of the recursion (Sec.~\ref{sec:base_case}), the key recurrence equation is 
proven in Theorem~\ref{thm:main} (Sec.~\ref{sec:recurrence}). Other details to improve algorithmic efficiency 
are provided in Secs.~\ref{sec:trick}--\ref{sec:multiple}.} 
We then provide an analysis of the running time (Sec.~\ref{sec:runtime}). Finally, we show how the algorithm can be adapted to compute the probability 
that a gene tree is monophyletically concordant with $S$, given $S$ and the number $m_s$ of genes sampled from each leaf species $s$ (Sec.~\ref{sec:alg2}).

\subsection{The single-species case}\label{sec:base_case}

If the species tree consists of a single node $s$ with branch length $\ell_s$, the probability of any gene tree $G$ can be computed efficiently.
Since the probability of any gene tree topology with ranked internal nodes is $L(|\mathcal{L}(G)|,1,\ell_s)$, 
it suffices to multiply that probability by the number of ways of ranking the internal nodes of $G$, which we denote by $r(G)$. 
Tree topologies whose internal nodes are ranked are also known as \emph{labeled histories}~\cite{edwards1970estimation,felsenstein2004inferring}
{or \emph{ranked tree topologies}~\cite{degnan2012probability}}.


For a general rooted binary tree topology $T$, $r(T)$ is given by the following two known formulae, 
where $i_v=|I(T_v)|$, and $c_1(v), c_2(v)$ are the child nodes of $v$ in $T$:
\begin{align}
  r(T) &= \prod_{v\in I(T)} {i_{c_1(v)}+i_{c_2(v)} \choose i_{c_1(v)}} \label{eqn:rT_YangRannala}\\
       &= \frac{i_{root(T)}!}{\prod_{v\in I(T)}i_v} \label{eqn:rT_SteelMcKenzie}
\end{align}
(See, e.g.,~\cite{yang_rannala2014} for Eqn.~\ref{eqn:rT_YangRannala} and~\cite{SteelMcKenzie} for Eqn.~\ref{eqn:rT_SteelMcKenzie}, which is also valid when $T$ is non-binary.)

The probability of the gene tree is then simply
\begin{equation}
\Pr[G|S]=r(G)\cdot L(|\mathcal{L}(G)|,1,\ell_s)
\label{eq-prob-one-species}
\end{equation}


\subsection{The main recursion}\label{sec:recurrence}

If the gene tree $G$ is monophyletically concordant with the species tree $S$, 
then the only coalescences that can occur in a particular clade of the species tree 
themselves form a clade in the gene tree. 
Specifically, the only nodes of $G$ that can appear in $S_s$ are those in $G_{sm^{-1}(s)}$.
(Recall that $T_v$ denotes the subtree of $T$ {containing} $v$ and all of its descendants.)
This observation allows us to subdivide the general problem on $S$ and $G$ into subproblems involving subtrees $S_s$ and $G_{sm^{-1}(s)}$, for all $s\in V(S)$.
More precisely,
we will compute, for each internal node $s$ of $S$, the total probability of all ranked coalescent histories in $R(S_{s},G_{sm^{-1}(s)},k)$ for $0 \leq k \leq |I(G_{sm^{-1}(s)})|$.   
Note that if $G$ were not monophyletically concordant with $S$, then the coalescences that can occur in $S_s$ would not form a single 
clade within $G$, meaning that $sm^{-1}(s)$ could not be uniquely defined.

Let $h$ be a ranked coalescent history in $R(S_{s},G_{sm^{-1}(s)},k)$. For any $j \leq k$, we can modify $h$ by taking the top $j$ coalescence events in $h$ and moving them to the parent of $s$ in $S$. The probability of the modified ranked coalescent history $h'$ is
\[
\Pr[h'|S_{parent(s)}]=\Pr[h|S_s]\cdot\frac{L(j+1,1,\ell_{parent(s)})\cdot L(k+1,j+1,\ell_{s})}{L(k+1,1,\ell_{s})}
\]
As we show below (Theorem \ref{thm:main}), 
we can reason analogously when moving coalescences from two children nodes $s_1$ and $s_2$ to their parent $s$, meaning that the probability
of the resulting ranked coalescent history can be easily expressed as a function of the probabilities of the two ranked coalescent histories from 
$R(S_{s_1},G_{sm^{-1}(s_1)})$ and $R(S_{s_2},G_{sm^{-1}(s_2)})$.

Since any ranked coalescent history in $R(S_{s},G_{sm^{-1}(s)},k)$ can be obtained, in a unique way, by moving $k-1$ coalescences 
(all the ones occurring in $s$ except $sm^{-1}(s)$) from $s_1$ and $s_2$ into $s$, this operation gives us a way to express the total 
probability of $R(S_{s},G_{sm^{-1}(s)},k)$ as a function of analogous probabilities for $s_1$ and $s_2$.


For simplicity, we will slightly abuse the notation and write $i_{s}$ to denote the number of internal nodes in $G_{sm^{-1}(s)}$ to simplify the form of some expressions to follow. In other words, $i_s$ denotes the maximum number of coalescences that can take place in $s$.


\begin{theorem}
Let $P_{s,k}=\sum_{h \in R(S_s,G_{sm^{-1}(s)},k)} \Pr[h|S_s]$.
Then \label{thm:main}
\begin{equation}\hspace{-1cm}
P_{s,k}= \sum_{\substack{k_1',k_2' \geq 0: \\ k_1'+k_2'+1=k}} \sum_{k_1=k_1'}^{i_{s_1}} \sum_{k_2=k_2'}^{i_{s_2}} P_{s_1,k_1}\cdot \frac{L(k_1+1,k_1'+1,\ell_{s_1})}{L(k_1+1,1,\ell_{s_1})} \cdot P_{s_2,k_2}\cdot \frac{L(k_2+1,k_2'+1,\ell_{s_2})}{L(k_2+1,1,\ell_{s_2})} \cdot {k_1'+k_2' \choose k_1'} \cdot L(k+1,1,\ell_s)
\label{eq-psk}
\end{equation}
\end{theorem}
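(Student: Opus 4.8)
\emph{Proof proposal.} Let $v^{\star}:=sm^{-1}(s)$ be the root of $G_{sm^{-1}(s)}$ and let $v_1,v_2$ be its two children in $G$. It follows readily from monophyletic concordance (as in the discussion preceding the theorem) that $v_1=sm^{-1}(s_1)$ and $v_2=sm^{-1}(s_2)$, so $I(G_{sm^{-1}(s)})$ is the disjoint union of $\{v^{\star}\}$, $I(G_{sm^{-1}(s_1)})$ and $I(G_{sm^{-1}(s_2)})$, and every $v\in I(G_{sm^{-1}(s_i)})$ has $sm(v)\in V(S_{s_i})$. The plan is to build an explicit bijection
\[
R(S_s,G_{sm^{-1}(s)},k)\;\longleftrightarrow\;\bigl\{(h_1,h_2,k_1',k_2',\pi)\bigr\},
\]
where $h_i\in R(S_{s_i},G_{sm^{-1}(s_i)},k_i)$ with $k_i'\le k_i\le i_{s_i}$, where $k_1'+k_2'+1=k$, and where $\pi$ is one of the ${k_1'+k_2' \choose k_1'}$ ways of interleaving two totally ordered blocks of sizes $k_1'$ and $k_2'$; and then to show that $\Pr[h\mid S_s]$ factors over this bijection exactly into the summand of Eqn.~\ref{eq-psk}.

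Given $h\in R(S_s,G_{sm^{-1}(s)},k)$ with underlying coalescent history $\sigma$, the requirement that $\sigma(v^{\star})$ be an ancestor of $sm(v^{\star})=s$ inside $S_s$ forces $\sigma(v^{\star})=s$, and since $v^{\star}$ is a $G$-ancestor of all other coalescences of $G_{sm^{-1}(s)}$ it receives the top rank $k$ within the partial coalescent of $s$. Each of the remaining $k-1$ coalescences that $h$ places in $s$ lies in $I(G_{sm^{-1}(s_1)})$ or in $I(G_{sm^{-1}(s_2)})$; let $k_i'$ be the number of them in $I(G_{sm^{-1}(s_i)})$, so $k_1'+k_2'+1=k$. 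Ancestry-monotonicity of $\sigma$, together with the fact that $s$ is the only $S_s$-ancestor of $s$, yields: (i) the set $U_i$ of $I(G_{sm^{-1}(s_i)})$-coalescences placed in $s$ is closed under $G$-ancestors within $G_{sm^{-1}(s_i)}$; (ii) no $I(G_{sm^{-1}(s_i)})$-coalescence placed inside $S_{s_i}$ is a $G$-ancestor of a member of $U_i$; and, because $U_1$ and $U_2$ sit in the disjoint subtrees $G_{v_1}$, $G_{v_2}$, (iii) no member of $U_1$ is $G$-comparable to any member of $U_2$. I would then define $h_i$ by pushing the $k_i'$ coalescences of $U_i$ down into $s_i$, ranking them immediately above the coalescences $h$ already places in $s_i$, in the order inherited from the partial coalescent of $s$; by (i)--(ii) this makes $h_i$ a valid element of $R(S_{s_i},G_{sm^{-1}(s_i)},k_i)$, with $k_i$ equal to $k_i'$ plus the number of $I(G_{sm^{-1}(s_i)})$-coalescences that $h$ places in $s_i$, so automatically $k_i'\le k_i\le i_{s_i}$. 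By (iii) the only data lost is how the blocks $U_1$, $U_2$ were shuffled inside the partial coalescent of $s$, which is recorded by $\pi$. The inverse map is the ``move the top $k_i'$ coalescences of $s_i$ up to $s$'' operation of the paragraph preceding the theorem, applied to both children and then shuffled by $\pi$ with $v^{\star}$ on top; I would check that the two maps are mutually inverse.

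For the probability bookkeeping, I would fix a tuple and its $h$ and expand $\Pr[h\mid S_s]=\prod_{t\in V(S_s)}L(n_d(h,t),n_u(h,t),\ell_t)$. For every $t$ strictly below $s_i$, the counts $(n_d,n_u)$ agree in $h$ and in $h_i$. At the root $s_i$ of $S_{s_i}$, the $k_i+1$ lineages entering $s_i$ undergo $k_i$ coalescences under $h_i$ but only $k_i-k_i'$ under $h$, so the factor changes from $L(k_i+1,1,\ell_{s_i})$ to $L(k_i+1,k_i'+1,\ell_{s_i})$. At $s$ itself, $n_d(h,s)=(k_1'+1)+(k_2'+1)=k+1$ and $n_u(h,s)=1$, contributing $L(k+1,1,\ell_s)$. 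Hence
\[
\Pr[h\mid S_s]=\Pr[h_1\mid S_{s_1}]\,\Pr[h_2\mid S_{s_2}]\,\frac{L(k_1+1,k_1'+1,\ell_{s_1})}{L(k_1+1,1,\ell_{s_1})}\,\frac{L(k_2+1,k_2'+1,\ell_{s_2})}{L(k_2+1,1,\ell_{s_2})}\,L(k+1,1,\ell_s),
\]
which is independent of $\pi$. Summing over the bijection, the ${k_1'+k_2' \choose k_1'}$ choices of $\pi$ each contribute this same quantity, the inner sums of $\Pr[h_i\mid S_{s_i}]$ over $R(S_{s_i},G_{sm^{-1}(s_i)},k_i)$ collapse to $P_{s_i,k_i}$, and gathering over all admissible $k_1',k_2',k_1,k_2$ reproduces the right-hand side of Eqn.~\ref{eq-psk}. (When $s_i$ is a leaf, $S_{s_i}$ is a single node and $P_{s_i,k_i}=0$ unless $k_i=i_{s_i}$, in agreement with Eqn.~\ref{eq-prob-one-species}, so no separate base case is needed inside the recursion step.)

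I expect the main obstacle to be making the combinatorial bijection fully rigorous: in particular establishing (iii) and hence the exact binomial count of interleavings, and checking that the push-down reconstruction of $h_1$ and $h_2$ always yields valid partial coalescents and genuinely inverts the move-up operation. Once that correspondence is in place, the probability identity is a mechanical telescoping of $L$-factors, as sketched above.
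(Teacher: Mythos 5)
Your proof is correct and follows essentially the same route as the paper's: the paper builds each $h$ from $(h_1,h_2,k_1',k_2')$ plus one of the ${k_1'+k_2' \choose k_1'}$ interleavings, computes the identical ratio of $L$-factors, and asserts (in the text preceding the theorem) the uniqueness of the decomposition that you verify explicitly. Your write-up is merely more careful about the bijection---properties (i)--(iii) and the mutual inverseness are left implicit in the paper---but the decomposition and the probability bookkeeping are the same.
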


\begin{proof}
Let $h_1 \in R(S_{s_1},G_{sm^{-1}(s_1)},k_1)$ and $h_2 \in R(S_{s_2},G_{sm^{-1}(s_2)},k_2)$. Take $k_1' \leq k_1$ top coalescences from $s_1$ and $k_2' \leq k_2$ top coalescences from $s_2$ and place them in $s$ without changing the ordering of coalescences within $G_{sm^{-1}(s_1)}$ and $G_{sm^{-1}(s_2)}$. Lineages moved from $G_{sm^{-1}(s_1)}$ and $G_{sm^{-1}(s_2)}$ can be interspersed in ${ k_1'+k_2' \choose k_1'}$ ways, meaning that there are ${ k_1'+k_2' \choose k_1'}$ possible ranked coalescent histories that can be created in this way for each choice of $h_1,h_2,k_1'$ and $k_2'$. Each such ranked coalescent history
has probability
\[
\Pr[h|S_s]=\Pr[h_1|S_{s_1}]\Pr[h_2|S_{s_2}]\frac{L(k_1+1,k_1'+1,\ell_{s_1})}{L(k_1+1,1,\ell_{s_1})} \cdot \frac{L(k_2+1,k_2'+1,\ell_{s_2})}{L(k_2+1,1,\ell_{s_2})} \cdot L(k_1'+k_2'+2,1,\ell_s)
\]
where the ratios correspond to the change of partial coalescent probabilities in $s_1$ and $s_2$ due to removing respectively $k_1'$ and $k_2'$ coalescences. The final term denotes the probability of the partial coalescent of $s$; there are $k_1'+k_2'+1$ coalescences in $s$ (the last one being node $sm^{-1}(s)$ in $G$) and all lineages coalesce into one, meaning that there must be $k_1'+k_2'+2=k+1$ lineages at the bottom of $s$. The overall result is obtained from summing over all choices of $h_1,h_2,k_1,k_2$ and $k_1'$.
\end{proof}

We adopt the convention that  $P_{s,k}$ is defined for all $k\in \{0,1,\ldots,i_s\}$, with $P_{s,0}=0$.


\subsection{Computing $P_{s,k}$ efficiently \label{sec:trick}}

The nested sum in Equation~\ref{eq-psk} can be evaluated more efficiently by pre-computing some partial sums. We can rewrite Equation~\ref{eq-psk} as
\begin{equation}\hspace{-1cm}
P_{s,k} {=} {\sum_{\substack{k_1',k_2' \geq 0: \\ k_1'+k_2'+1=k}}} 
\left( \sum_{k_1=k_1'}^{i_{s_1}} P_{s_1,k_1} \frac{L(k_1+1,k_1'+1,\ell_{s_1})}{L(k_1+1,1,\ell_{s_1})} \right) 
\left( \sum_{k_2=k_2'}^{i_{s_2}} P_{s_2,k_2} \frac{L(k_2+1,k_2'+1,\ell_{s_2})}{L(k_2+1,1,\ell_{s_2})} \right) 
\cdot {k_1'+k_2' \choose k_1'} \cdot L(k+1,1,\ell_s)
\label{eq-psk-factor}
\end{equation}
Letting 
\begin{equation}
U_{s,k'}=\sum_{k=k'}^{i_s}P_{s,k}\frac{L(k+1,k'+1,\ell_{s})}{L(k+1,1,\ell_{s})}
\label{eqn:usk}
\end{equation}
we can rewrite Equation~\ref{eq-psk} as
\begin{equation}
P_{s,k}=\sum_{\substack{k_1',k_2'\ge 0: \\ k_1'+k_2'+1=k}} U_{s_1,k_1'}U_{s_2,k_2'} {k_1'+k_2' \choose k_1'} \cdot L(k+1,1,\ell_s)
\label{eq-psk-fast}
\end{equation}
Assuming that all the relevant $L(n_d,n_u,\ell_s)$ values have been previously computed (see Sec.~\ref{sec:runtime}),
as well as the {$U_{s_i,k'_i}$} values for $s_1$ and $s_2$, 
Equation~\ref{eq-psk-fast}
allows us to evaluate each $P_{s,k}$ in $O(k)$ time, 
and thus all $P_{s,k}$'s for a given $s$ in $O(i_s^2)$. 
Similarly, computing all $U_{s,k'}$ values for a given $s$ takes $O(i_s^2)$ time,
using Equation~\ref{eqn:usk}.


\subsection{The complete algorithm}

Our algorithm computes, for each leaf $s$ of $S$, the probability of the subtree of $G$ composed of leaves in $genes(s)$ and their common ancestors using Equation~\ref{eq-prob-one-species}. Then it conducts a post-order traversal of the internal nodes of $S$ and computes $P_{s,k}$ values by repeatedly applying the recurrences described above. 
The gene tree probability is the sum of all the values of $P_{s,k}$ at the root of $S$.
Algorithm~\ref{alg-gene-tree-agree} illustrates this in detail.

\begin{algorithm}
\caption{GeneTreeProb($S$,$G$)}
\label{alg-gene-tree-agree}
\begin{algorithmic}
\FORALL{$s \in \mathcal{L}(S)$}
	\STATE Find $r=lca_G(genes(s))$
	\STATE Compute $\Pr[G_r|S_s]$ using Equation~\ref{eq-prob-one-species}
	\STATE Set $P_{s,i_r}=\Pr[G_r|S_s]$,  $P_{s,k}=0$ for $0 \leq k < i_r$
	\STATE Compute and store $U_{s,k}$ values for $0 \leq k \leq i_r$
\ENDFOR
\FORALL{$s \in I(S)$ in post-order}
    \FORALL{$0 \leq k \leq i_s$}
		\STATE Compute $P_{s,k}$ using Equation~\ref{eq-psk-fast}
        \STATE Compute and store $U_{s,k}$ using Equation~\ref{eqn:usk}
	\ENDFOR
\ENDFOR
\RETURN $\sum_{k=1}^{|I(G)|}P_{root(S),k}$
\end{algorithmic}
\end{algorithm}



\subsection{Analyzing multiple gene trees}\label{sec:multiple}


Algorithm~\ref{alg-gene-tree-agree} should be preceded by the precomputation of all relevant $L(n_d,n_u,\ell_s)$ values, as each of these values may be needed multiple times. This becomes particularly important when the goal is to calculate the probability of multiple gene tree topologies $G_1, \dots, G_t$ that are monophyletically concordant with the same fixed species tree $S$. This task may be of interest, for example, when there is uncertainty in the inferred gene tree topology, and we may want to consider (or sum over) a set of alternative hypotheses.  
Algorithm~\ref{alg-multiple-trees} makes the precomputation of the $L(n_d,n_u,\ell_s)$ values explicit, as well as the possibility of analyzing multiple gene trees.  
We assume that, for any given leaf species $s$, the number of genes sampled from $s$ in $G_k$ is the same for all $k\in \{1,\dots,t\}$ (and therefore $i_s$ is the same for all gene trees). This is a reasonable assumption, for example, if the gene tree topologies are alternative evolutionary hypotheses for the same multiple sequence alignment.

\begin{algorithm}
\caption{GeneTreesProbabilities($S, G_1, \cdots, G_t$)}
\label{alg-multiple-trees}
\begin{algorithmic}
\FORALL{$s \in V(S)$}
    \FORALL{$n_d,n_u$ such that $1\le n_u \le n_d \le i_s + 1$}
        \STATE Compute and store $L(n_d,n_u,\ell_s)$
    \ENDFOR
\ENDFOR
\FORALL{$i$ from $1$ to $t$}
    \STATE GeneTreeProb($S$,$G_i$)
\ENDFOR 
\end{algorithmic}
\end{algorithm}


\subsection{Running time analysis} \label{sec:runtime}

When analyzing a single gene tree, the running time of our algorithm is dominated by computing the $L(n_d,n_u,\ell_s)$ values. 
This is done by exponentiating the matrix $\ell_sA$ (see Eq.~\ref{eqn:mat_exp}) using a numerical algorithm from the Python package SciPy~\cite{AlMohy}, which takes $O(i_s^3)$ time for node $s$. 
In contrast, computing all $U_{s,k}$ and $P_{s,k}$ values for a given $s$ takes $O(i_s^2)$ time (see Section~\ref{sec:trick}). Thus, if $n$ is the number of species in $S$, and $m$ the number of genes (leaves) in $G$, the total time spent computing all $L(n_d,n_u,\ell_s)$ values 
is at most $O(n m^3)$, whereas the total time spent computing $U_{s,k}$ and $P_{s,k}$ is $O(n m^2)$. This yields a total running time, for one gene tree, of $O(n m^3)$.

When computing the probability of multiple gene trees given a fixed species tree, the $L(n_d,n_u,\ell_s)$ values only need to be computed once, whereas the $U_{s,k}$ and $P_{s,k}$ values must be computed separately for each gene tree, as shown in Algorithm~\ref{alg-multiple-trees}. We thus have the following result.

\begin{prop}
Algorithm~\ref{alg-multiple-trees} requires $O(n m^3+t n m^2)$ time to calculate the probability of $t$ gene trees with at most $m$ genes each, with respect to a species tree with $n$ species.
\label{thm:complexity}
\end{prop}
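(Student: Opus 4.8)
The plan is to bound the work done at each node of $S$, separate the one-time precomputation of the $L$-values from the per-gene-tree work, and sum. The single structural fact underpinning everything is that $i_s \le m-1 < m$ for every $s\in V(S)$: by definition $i_s=|I(G_{sm^{-1}(s)})|$, and $G_{sm^{-1}(s)}$ is a subtree of $G$, which has $m$ leaves and hence exactly $m-1$ internal nodes. Since $n=|V(S)|$, every sum over nodes of $S$ of a per-node bound $O(i_s^c)$ is then at most $O(n\,m^c)$ — a crude estimate, but one that already gives the stated bound.

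First I would treat the precomputation loop of Algorithm~\ref{alg-multiple-trees}. For a fixed $s$, computing all values $L(n_d,n_u,\ell_s)$ with $1\le n_u\le n_d\le i_s+1$ is done by exponentiating the single $(i_s+1)\times(i_s+1)$ rate matrix $\ell_s A$ as in Equation~\ref{eqn:mat_exp}; by the SciPy routine of~\cite{AlMohy} this costs $O(i_s^3)=O(m^3)$. Summing over the $n$ nodes of $S$ gives $O(n\,m^3)$, and — this is the point that the structure of Algorithm~\ref{alg-multiple-trees} is designed to exploit — this loop runs only once, before any gene tree is processed.

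Next I would analyse a single call to GeneTreeProb (Algorithm~\ref{alg-gene-tree-agree}), and then multiply by $t$. For a leaf $s$: finding $r=lca_G(genes(s))$ and evaluating $\Pr[G_r\mid S_s]$ via Equation~\ref{eq-prob-one-species} requires computing $r(G_r)$ — a ratio of a factorial and a product over the $O(m)$ internal nodes of $G_r$ via Equation~\ref{eqn:rT_SteelMcKenzie}, i.e.\ $O(m)$ arithmetic operations — together with a table look-up of the already-tabulated $L(|\mathcal{L}(G_r)|,1,\ell_s)$; computing the $U_{s,k}$ values via Equation~\ref{eqn:usk} then costs $O(i_s^2)$. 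For an internal $s$, by the argument of Sections~\ref{sec:trick}--\ref{sec:runtime}, computing all $P_{s,k}$ via Equation~\ref{eq-psk-fast} and all $U_{s,k}$ via Equation~\ref{eqn:usk} takes $O(i_s^2)$ (all needed $L$-values are available from the precomputation), and the closing sum $\sum_k P_{root(S),k}$ adds only $O(m)$. Hence one call costs $\sum_{s\in V(S)}O(i_s^2)=O(n\,m^2)$, the $t$ calls in the second loop of Algorithm~\ref{alg-multiple-trees} cost $O(t\,n\,m^2)$, and adding the precomputation phase yields the claimed $O(n\,m^3+t\,n\,m^2)$.

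The one point that needs care is conceptual rather than computational: justifying that the $O(i_s^3)$ matrix-exponentiation term is incurred once per species and not once per (species, gene tree) pair. This is exactly what the hypothesis of Algorithm~\ref{alg-multiple-trees} — that every $G_k$ samples the same number of genes from each leaf species — buys us: it fixes $i_s$, and hence the set of required $L(n_d,n_u,\ell_s)$ values, independently of $k$, so these can be tabulated before the loop over gene trees; without it the bound would degrade to $O(t\,n\,m^3)$. A minor secondary check is to confirm that no other operation (the $lca_G$ computations, the factorial/product evaluations, the final summation) exceeds $O(n\,m^2)$ per gene tree, which is immediate as each is at most linear or quadratic in quantities bounded by $m$.
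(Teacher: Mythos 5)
Your proof is correct and follows essentially the same route as the paper: the paper's argument in Section~\ref{sec:runtime} likewise bounds the one-time $L$-value precomputation by $O(i_s^3)$ per node via matrix exponentiation (giving $O(nm^3)$ overall) and the per-gene-tree $U_{s,k}$/$P_{s,k}$ work by $O(i_s^2)$ per node (giving $O(nm^2)$ per tree, hence $O(tnm^2)$ for $t$ trees). Your additional remarks on the leaf initialization, the $lca$ computations, and the role of the equal-sampling assumption are correct details that the paper leaves implicit.
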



Interestingly, in a number of realistic scenarios where the species tree is symmetric and the gene sampling is uniform, the above worst-case time bound can be improved. Suppose that the same number of genes is sampled from each leaf species, and that $S$ is fully balanced, meaning that for each internal node $s\in I(S)$, the subtrees rooted in its children $s_1$ and $s_2$ have the same number of leaves. Because of the uniform gene sampling, $n = O(m)$, which implies that Proposition~\ref{thm:complexity} would give an upper bound on runtime of $O(m^4+tm^3)$. However, because of the symmetry of the species tree, the running time satisfies
\[
T(m) = 2T(m/2) + O(m^3) + O(tm^2)
\]
where 
$2T(m/2)$ corresponds to the time spent on the two subtrees rooted on the children of the root $r$ of the tree, 
$O(m^3) = O(i_r^3)$ corresponds to the time spent on computing the $L(n_d,n_u,\ell_r)$ values at $r$, 
and $O(tm^2) = O(t i_r^2)$ corresponds to the computation of the $U_{r,k}$ and $P_{r,k}$ values for all gene trees.
Based on the recurrence above, it can be shown using standard algorithm analysis techniques~\cite{Cormen} that $T(m)$ is bounded by $O(m^3+tm^2)$, rather than $O(m^4+tm^3)$. This suggests that for many practical scenarios featuring balanced gene samplings and trees, our algorithm will run an order of magnitude faster than what Proposition~\ref{thm:complexity} suggests.



\subsection{Computing the probability of monophyletic concordance} \label{sec:alg2}


When the number of genes sampled from each species is large, the probability of any gene tree topology will be quite small simply because of the large number of possible partial coalescents in any leaf node of the species tree. In such situations, it might be useful to instead compute the probability that the gene tree is monophyletically concordant, given a species tree and the number $m_s$ of sampled genes from each leaf species $s$. We emphasize that this probability does not depend on a specific gene tree, but instead is the sum of the probabilities of all gene trees that are monophyletically concordant with $S$, given the numbers of sampled genes.



Algorithm~\ref{alg-gene-tree-agree} can be modified to compute that probability by changing the initialization at the leaves of $S$. Let $\mathcal{G}(s)$ be the set of all gene tree topologies monophyletically concordant with $S_s$ with $m_{s'}$ sampled genes from each leaf ${s'}\in \mathcal{L}(S_s)$. We define
\[
P'_{s,k}= \sum_{G' \in \mathcal{G}(s)} \sum_{h \in R(S_s,G',k)} \Pr[h|S_s]
\]
We can view each $P'_{s,k}$ as 
the sum of the values of $P_{s,k}$ over all $G'$ that are monophyletically concordant with $S_s$.
For each leaf $s$, we initialize $P'_{s,m_s-1}$ to $p_{m_s,1}(\ell_s)$. This corresponds to summing over all gene tree topologies over samples from $s$. The trees in $\mathcal{G}(s)$ differ only in the relationships between genes 
sampled from the same leaf species. Note that for each of these gene trees, Equation~\ref{eq-psk} is satisfied. Summing both sides of Equation~\ref{eq-psk} over all $G'\in \mathcal{G}(s)$, it is easy to see that Equation~\ref{eq-psk} also holds with $P'_{s,k}$ replacing $P_{s,k}$.
Thus, we can compute the values of $P'_{s,k}$ recursively in the same way as $P_{s,k}$ following Equations~\ref{eq-psk},~\ref{eqn:usk}, and~\ref{eq-psk-fast}. The only change to the algorithm is the initialization of $P'_{s,k}$ values at the leaves. Algorithm~\ref{alg-mono-concord} illustrates this in detail.

The running time complexity of Algorithm~\ref{alg-mono-concord} is the same as that of the algorithm to compute the probability of a single gene tree topology with $m_s$ genes sampled from leaf species $s$, as the only difference is the initialization step, which carries no extra cost. Thus, Algorithm~\ref{alg-mono-concord} runs in $O(nm^3)$ time, where $m=\sum_{s\in \mathcal{L}(S)} m_s$.

\begin{algorithm}
\caption{MonophyleticConcordance($S$,$\left\{m_s|s \in \mathcal{L}(S) \right\}$)}
\label{alg-mono-concord}
\begin{algorithmic}
\FORALL{$s \in V(S)$ in post-order}
    \STATE \textbf{if} $s$ has children $s_1$ and $s_2$, \textbf{then} let $m_s=m_{s_1}+m_{s_2}$
    \FORALL{$n_d,n_u$ such that $1\le n_u \le n_d \le m_s$}
        \STATE Compute and store $L(n_d,n_u,\ell_s)$
    \ENDFOR
\ENDFOR
\FORALL{$s \in \mathcal{L}(S)$}
	\STATE Set $P'_{s,m_s-1}=p_{m_s,1}(\ell_s)$,  $P'_{s,k}=0$ for $0 \leq k < m_s-1$.
	\STATE Compute and store $U_{s,k}$ values for $0 \leq k \leq m_s-1$.
\ENDFOR
\FORALL{$s \in I(S)$ in post-order}
    \FORALL{$0 \leq k \leq m_s-1$}
		\STATE Compute $P'_{s,k}$ using Equation~\ref{eq-psk-fast}
        \STATE Compute and store $U_{s,k}$ using Equation~\ref{eqn:usk}
	\ENDFOR
\ENDFOR
\RETURN $\sum_{k=1}^{m_{root(S)}}P'_{root(S),k}$
\end{algorithmic}
\end{algorithm}





\section{Implementation and experimental validation}

A Python implementation of all the algorithms described above, named \textsc{gtprob}, 
is available at~\url{https://github.com/truszk/gtprob}. 
As we show below, \textsc{gtprob} is able to process gene trees of up to 1000 taxa in a matter of minutes, on standard desktop computers. 


\label{sec-experiments}

Since exact (but non-polynomial) algorithms for the problem of calculating the probability of a gene tree given a species tree are known, 
we decided to test the component of \textsc{gtprob} that solves the same problem, restricted to monophyletically concordant gene tree topologies (Algorithm~\ref{alg-gene-tree-agree}).
We compared the running times and the probabilities computed by 
\textsc{gtprob} against those of two available programs: 
STELLS~\cite{STELLS} and \textsc{CompactCH}~\cite{CompactCH}. The latter 
is designed to be efficient for small numbers of species and large numbers of sampled genes. 
To investigate the relative advantages of different algorithms, we focused on two scenarios: one with large numbers of leaf species with small numbers of genes sampled from 
each species, and the other with small numbers of leaf species and large numbers of genes. 
Within each scenario, we varied the number of species and genes 
to investigate scalability.

We evaluated the performance of the algorithm on a large set of simulated data sets. 
Here, we define a data set as a single pair $(S,G)$, where the gene tree $G$
is monophyletically concordant with the species tree $S$.
To construct each 
data set, we sampled the species tree 
from the Yule process, conditioned on a specified number of leaves. 
Following that, we sampled a monophyletically concordant gene tree topology. 
First, we sampled a topology 
for each set of genes 
drawn from the same leaf species. 
These topologies were then connected with each other in such a way as to produce a monophyletically concordant gene tree. 
All experiments were run on servers with cores operating at 2.66 GHz and 72 GB of memory, except experiments in Appendix~\ref{sec-examples}, which were run on a laptop with cores operating at 2.3GHz and 8 GB of memory. Only one core was allowed for each run.



\subsection{Agreement between methods}

For almost all data sets we analyzed, all exact algorithms gave virtually the same probability values. For those data sets where we could successfully run STELLS, the difference in log-probability between \textsc{gtprob} and STELLS was less than $10^{-7}$ for the vast majority of data sets, and was never greater than $10^{-6}$. The difference between \textsc{gtprob} and \textsc{CompactCH} was similar, except for instances with 2 leaf species and 50 leaf genes in each species, where \textsc{CompactCH} reported much lower log-probability values, sometimes by 8 orders of magnitude. We suspect that this was caused by a software bug in \textsc{CompactCH} code. 
To investigate these discrepancies, we ran another exact approach, \textsc{PhyloNet}~\cite{yu2012probability}, on a number of small data sets, which always resulted in log-probabilities 
that were much closer to those of \textsc{gtprob} than those of STELLS.
A few instructive examples are presented in Appendix~\ref{sec-examples}. Since \textsc{PhyloNet} is slower than the other exact algorithms, we did not run it on the entirety of our simulated data sets.

We have also compared the output of our program to STELLS2, which is a faster, approximate method of computing gene tree probabilities. For all the data sets we have analyzed, the results of STELLS2 were vastly different from the other algorithms, with log-probabilities usually approximately twice the value given by the exact algorithms. For this reason, we have decided to focus on exact algorithms in our experiments.

\subsection{Running time comparison}\label{sec:runtime_moderate}
In a first set of experiments we compared running times over moderate-size data sets. We varied $n$, the number of species, and $m_s$, the number of genes sampled in each species, over every combination of $n\in \{4,8,16,24,32,40\}$ and $m_s\in \{1,3,5\}$, corresponding to scenarios with many species and few genes per species, and every combination of $n\in \{2,3,4\}$ and $m_s\in \{5,10,15,20,50\}$, for scenarios with few species and many genes. For each experimental setting, we simulated 10 data sets and reported the running times for the three algorithms. 
The average running times 
in the many species/few genes scenario and in the few species/many genes scenario are shown in Tables~\ref{tab-runtimes-sp} and~\ref{tab-runtimes-genes}, respectively.

We see that the running time of our algorithm grows relatively slowly as we increase the number of species or the number of genes. 
As a result, our algorithm is much faster than either STELLS or \textsc{CompactCH} in most settings. The only exceptions are data sets with very few 
genes, 
where STELLS is sometimes faster (top left of Table~\ref{tab-runtimes-sp} and first line of Table~\ref{tab-runtimes-genes}), and data sets with just $2$ species, where \textsc{CompactCH} is faster than our algorithm (left part of Table~\ref{tab-runtimes-genes}). For larger data sets, our algorithm runs much faster than the competing approaches, and is able to process instances that are impossible to analyze with either STELLS or \textsc{CompactCH}. We also observed that the variance in running times, across the 10 replicates, was in general much lower for \textsc{gtprob} than for the other programs (data not shown;  
see also Appendix~\ref{sec-examples}). 



\begin{table}[t!]
\footnotesize
\centering
\begin{tabular}{|c||c|c|c||c|c|c||c|c|c|}
\# species & \multicolumn{3}{c||}{1 gene/species} & \multicolumn{3}{c||}{3 genes/species} & \multicolumn{3}{c|}{5 genes/species} \\
\hline
 	& STELLS & \textsc{c.CH} & \textsc{gtprob} & STELLS & \textsc{c.CH} & \textsc{gtprob} & STELLS & \textsc{c.CH} & \textsc{gtprob} \\
\hline
4	&	{\bf 0.000}	&	{\bf 0.000}	&	0.345 	    &   {\bf 0.004}	&	0.153 	&	0.341 	    & {\bf 0.357}	&	3.025 	&	0.360 	\\
8	&	{\bf 0.000}	&	0.031 	    &	0.353 	    &	8.305 	    &	NA	    &	{\bf 0.423}	&	6003	    &	NA	&	{\bf 0.502}	\\
16	&	{\bf 0.092}	&	251.6 	    &	0.397 	    &	NA	        &	NA	    &	{\bf 0.672}	&	NA	        &	NA	&	{\bf 1.065}	\\
24	&	1.846 	    &	NA	        &	{\bf 0.442}	&	NA	        &	NA	    &	{\bf 0.928}	&	NA	        &	NA	&	{\bf 2.191}	\\
32	&	887.4  	    &	NA	        &	{\bf 0.504}	&	NA	        &	NA	    &	{\bf 1.647}	&	NA	        &	NA	&	{\bf 3.433}	\\
40	&	20852 	    &	NA	        &	{\bf 0.600}	&	NA	        &	NA	    &	{\bf 1.970}	&	NA	        &	NA	&	{\bf 7.415}	\\
\end{tabular}
\caption{Average running times (in seconds) of the three exact algorithms for varying numbers of species. \textsc{c.CH} stands for \textsc{CompactCH}. NA entries indicate cases where none of the 10 runs completed within 24 hours.
The lowest running time for each parameter combination is indicated in bold.}
\label{tab-runtimes-sp}
\bigskip\bigskip
\centering
\footnotesize
\begin{tabular}{|c||c|c|c||c|c|c||c|c|c|}
\# genes/species & \multicolumn{3}{c||}{2 species} & \multicolumn{3}{c||}{3 species} & \multicolumn{3}{c|}{4 species} \\
\hline
 	& STELLS & \textsc{c.CH} & \textsc{gtprob} & STELLS & \textsc{c.CH} & \textsc{gtprob} & STELLS & \textsc{c.CH} & \textsc{gtprob} \\
\hline
5	&	{\bf 0.000}	& {\bf 0.000} &	0.313 &	{\bf 0.019} &	0.063 &	0.342 &	{\bf 0.357}	&	3.025 	&	0.360 	\\
10	&	0.062 &	{\bf 0.005}	&	0.357 	&	12.33	&	1.227	&	{\bf 0.393}	&	2998 &	160.3	&	{\bf 0.458}	\\
15	&	3.774 &	{\bf 0.021}	&	0.367	&	4905	&	6.395	&   {\bf 0.464}	&	NA	&	2204	&	{\bf 0.561}	\\
20	&	144.1 &	{\bf 0.045}	&	0.381	&	NA	    &	21.75	&	{\bf 0.538}	&	NA	&	16879	&	{\bf 0.833}	\\
50	&	NA	  &	{\bf 0.527}	&	0.713 	&	NA	    &	1766 	&	{\bf 1.879}	&	NA	&	NA	    &	{\bf 4.014}	\\
\end{tabular}
\caption{Average running times (in seconds) of the three exact algorithms for varying numbers of genes per species. \textsc{c.CH} stands for \textsc{CompactCH}. NA entries indicate cases where none of the 10 runs completed within 24 hours.
The lowest running time for each parameter combination is indicated in bold.}
\label{tab-runtimes-genes}
\end{table}

\subsection{Scalability of \textsc{gtprob} for large data sets}

Finally, we investigated the performance of \textsc{gtprob} for large data sets that are not within reach of the current algorithms. We varied data set sizes from tens to thousands of genes. In line with the experiments in the previous section, we simulated 10 data sets for each experimental condition. 

The running times are shown in Figure~\ref{fig-runtimes-large}. We see that our algorithm is able to process gene trees with hundreds of leaves in a matter of seconds or minutes. For gene trees with over {$1000$} leaves, the running times range from about an hour to a few days. All in all, our algorithms can process data sets that are around two orders of magnitude larger {than} what is possible with STELLS.

\begin{figure}
\centering
\begin{minipage}{.5\textwidth}
\centering
  \includegraphics[width=1.01\linewidth]{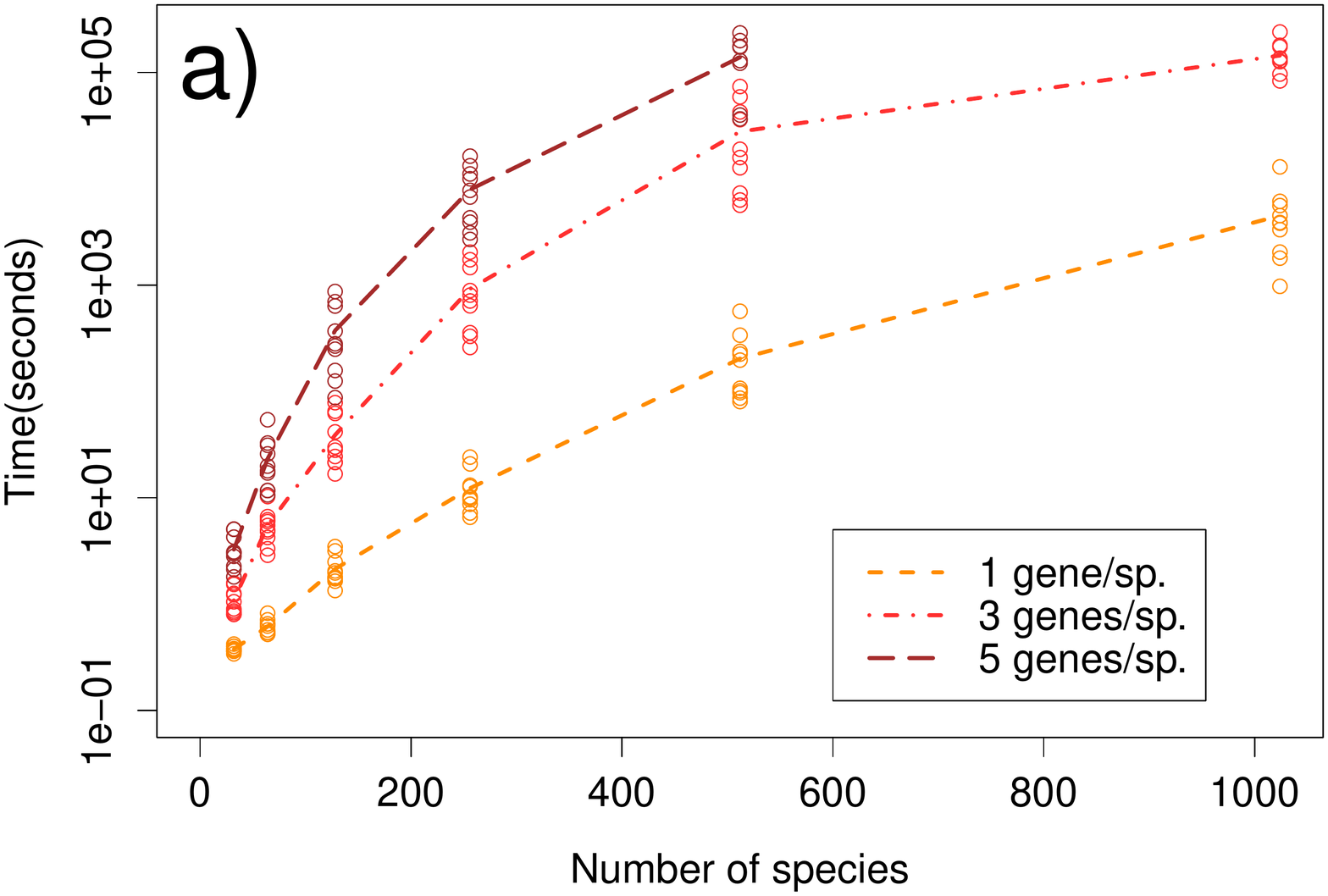}
\end{minipage}%
\begin{minipage}{.5\textwidth}
\centering
  \includegraphics[width=1.01\linewidth]{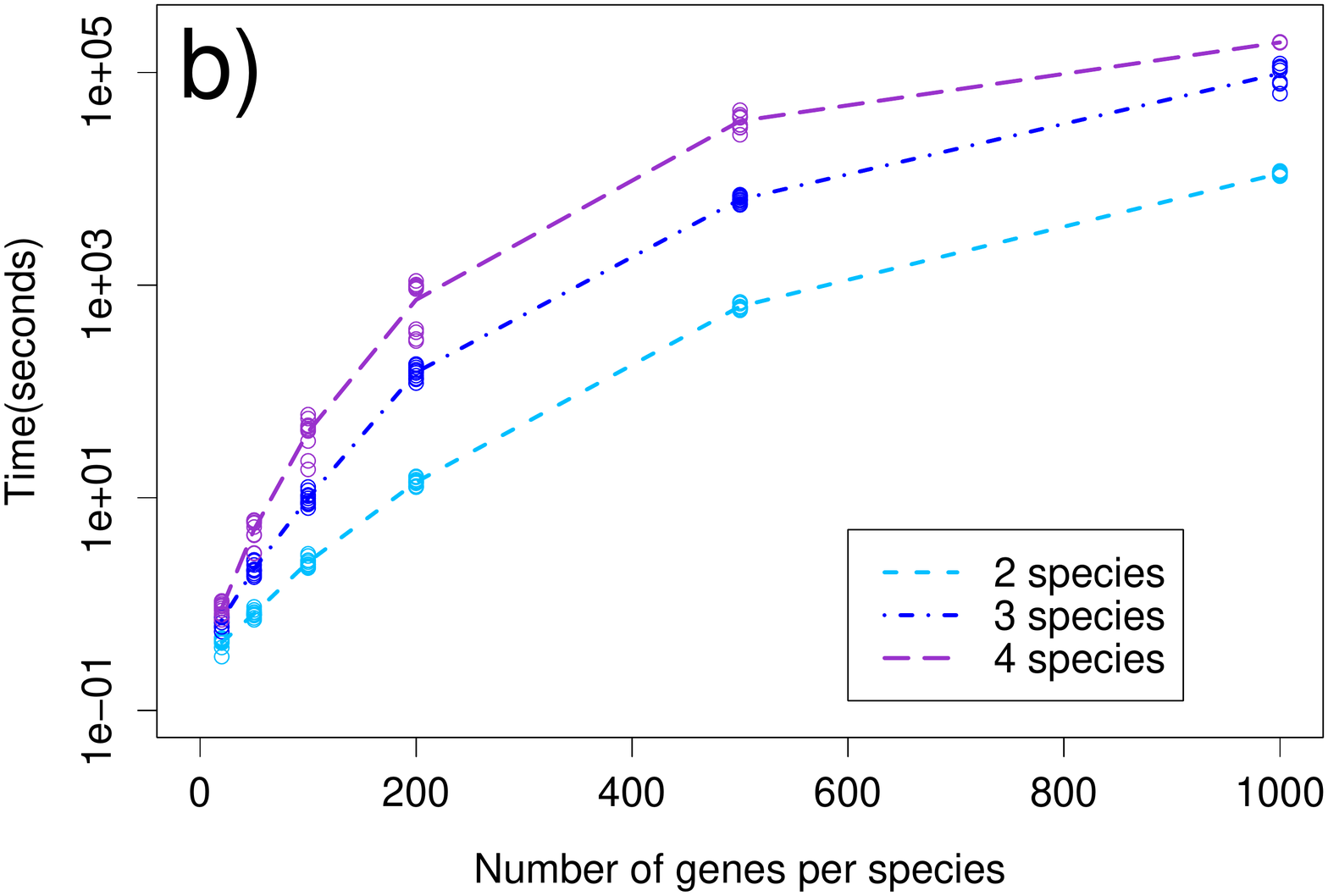}
\end{minipage}
\caption{a) The running times of Algorithm 1 for large numbers of leaf species. b) The running times of Algorithm 1 for large numbers of samples per species.} 
\label{fig-runtimes-large}
\end{figure}

\section{Conclusions}

We have presented efficient, polynomial-time algorithms for two problems involving the multispecies coalescent. 
The first algorithm computes the probability of any specific monophyletically concordant gene tree topology given a species tree.
The second algorithm computes the probability of monophyletic concordance of the gene tree, given a species tree.

From an algorithmic standpoint, these two results 
represent a major improvement over previous exact approaches. 
A clear example of this is provided by the case where only one gene is 
sampled from each leaf species. 
Note that in this case there is only one concordant gene tree topology,
regardless of the definition of concordance.
Computing the probability of this topology, and thus of concordance, 
using any of the previous approaches, involves the enumeration of a
number of scenarios that explodes combinatorially in the size of the species tree. 
Specifically, the maximum number of coalescent histories (enumerated by the algorithm of Degnan 
and Salter \cite{DegnanSalter}) grows super-exponentially \cite{disanto2015coalescent},
while that of ancestral configurations (enumerated by STELLS \cite{STELLS}) and 
compact coalescent histories (enumerated by \textsc{CompactCH} \cite{CompactCH})
grows exponentially.
This combinatorial explosion does not only concern a worst-case analysis, but also holds 
in an average case, for species trees drawn uniformly at random \cite{disanto2017enumeration, disanto2019enumeration, disanto2019number}.

Because our algorithms entirely avoid enumerating any type of coalescent scenario, they also avoid the issue of combinatorial explosion. In fact,
despite their implementation in a relatively inefficient programming language (Python), 
our algorithms enable the analysis of data sets with thousands of samples in a matter of hours, on standard desktop machines. This is roughly two orders of magnitude more than is possible with current algorithms (implemented in C/C++). 

All known exact algorithms for computing the probability of general tree topologies under the multispecies coalescent do not scale beyond tens of genes or several species. We hope that the algorithmic techniques introduced in this work will also lead to more efficient algorithms for computing non-concordant gene tree probabilities. We plan to explore this direction in future work. 

\section{Acknowledgements}

We would like to thank Simon Tavar{\'e} for suggesting the solution to the numerical problems with computing $p_{n_d,n_u}(\ell)$.  
We are also grateful to the associate editor and the reviewers, whose constructive comments helped us to significantly clarify the manuscript.

\bibliographystyle{plain}
\bibliography{refTPB}

\begin{thebibliography}{10}

\bibitem{AlMohy}
Awad~H Al-Mohy and Nicholas~J Higham.
\newblock A new scaling and squaring algorithm for the matrix exponential.
\newblock {\em SIAM Journal on Matrix Analysis and Applications},
  31(3):970--989, 2010.

\bibitem{Cormen}
Thomas~H Cormen, Charles~E Leiserson, Ronald~L Rivest, and Clifford Stein.
\newblock {\em Introduction to Algorithms}.
\newblock MIT press Cambridge, 2nd edition, 2001.

\bibitem{degnan2009gene}
James~H Degnan and Noah~A Rosenberg.
\newblock Gene tree discordance, phylogenetic inference and the multispecies
  coalescent.
\newblock {\em Trends in Ecology \& Evolution}, 24(6):332--340, 2009.

\bibitem{degnan2012probability}
James~H Degnan, Noah~A Rosenberg, and Tanja Stadler.
\newblock The probability distribution of ranked gene trees on a species tree.
\newblock {\em Mathematical Biosciences}, 235(1):45--55, 2012.

\bibitem{DegnanSalter}
James~H Degnan and Laura~A Salter.
\newblock Gene tree distributions under the coalescent process.
\newblock {\em Evolution}, 59(1):24--37, 2005.

\bibitem{disanto2015coalescent}
Filippo Disanto and Noah~A Rosenberg.
\newblock Coalescent histories for lodgepole species trees.
\newblock {\em Journal of Computational Biology}, 22(10):918--929, 2015.

\bibitem{disanto2017enumeration}
Filippo Disanto and Noah~A Rosenberg.
\newblock Enumeration of ancestral configurations for matching gene trees and
  species trees.
\newblock {\em Journal of Computational Biology}, 24(9):831--850, 2017.

\bibitem{disanto2019enumeration}
Filippo Disanto and Noah~A Rosenberg.
\newblock Enumeration of compact coalescent histories for matching gene trees
  and species trees.
\newblock {\em Journal of Mathematical Biology}, 78(1-2):155--188, 2019.

\bibitem{disanto2019number}
Filippo Disanto and Noah~A Rosenberg.
\newblock On the number of non-equivalent ancestral configurations for matching
  gene trees and species trees.
\newblock {\em Bulletin of Mathematical Biology}, 81(2):384--407, 2019.

\bibitem{edwards1970estimation}
Anthony~WF Edwards.
\newblock Estimation of the branch points of a branching diffusion process.
\newblock {\em Journal of the Royal Statistical Society: Series B
  (Methodological)}, pages 155--164, 1970.

\bibitem{felsenstein2004inferring}
Joseph Felsenstein.
\newblock {\em Inferring Phylogenies}.
\newblock Sinauer associates, 2004.

\bibitem{hein2004gene}
Jotun Hein, Mikkel Schierup, and Carsten Wiuf.
\newblock {\em Gene {G}enealogies, {V}ariation and {E}volution: a {P}rimer in
  {C}oalescent {T}heory}.
\newblock Oxford University Press, USA, 2004.

\bibitem{hudson1983testing}
Richard~R Hudson.
\newblock Testing the constant-rate allele model with protein sequence data.
\newblock {\em Evolution}, 37(1):203--217, 1983.

\bibitem{liu2010mp-est}
Liang Liu, Lili Yu, and Scott~V Edwards.
\newblock A maximum pseudo-likelihood approach for estimating species trees
  under the coalescent model.
\newblock {\em BMC Evolutionary Biology}, 10(1):302, 2010.

\bibitem{maddison1997gene}
Wayne~P Maddison.
\newblock Gene trees in species trees.
\newblock {\em Systematic Biology}, 46(3):523--536, 1997.

\bibitem{mehta2016probability}
Rohan~S Mehta, David Bryant, and Noah~A Rosenberg.
\newblock The probability of monophyly of a sample of gene lineages on a
  species tree.
\newblock {\em Proceedings of the National Academy of Sciences},
  113(29):8002--8009, 2016.

\bibitem{nei1986stochastic}
Masatoshi Nei.
\newblock Stochastic errors in dna evolution and molecular phylogeny.
\newblock In H.~Gershowitz, D.L. Rucknagel, and R.E. Tashian, editors, {\em
  {Evolutionary Perspectives and the New Genetics}}, pages 133--147. A.R. Liss,
  New York, 1986.

\bibitem{pamilo1988relationships}
Pekka Pamilo and Masatoshi Nei.
\newblock Relationships between gene trees and species trees.
\newblock {\em Molecular Biology and Evolution}, 5(5):568--583, 1988.

\bibitem{pei2017stells2}
Jingwen Pei and Yufeng Wu.
\newblock {STELLS2}: fast and accurate coalescent-based maximum likelihood
  inference of species trees from gene tree topologies.
\newblock {\em Bioinformatics}, 33(12):1789--1797, 2017.

\bibitem{rannala2003bayes}
Bruce Rannala and Ziheng Yang.
\newblock Bayes estimation of species divergence times and ancestral population
  sizes using {DNA} sequences from multiple loci.
\newblock {\em Genetics}, 164(4):1645--1656, 2003.

\bibitem{RosenbergAgreement}
Noah~A Rosenberg.
\newblock The probability of topological concordance of gene trees and species
  trees.
\newblock {\em Theoretical Population Biology}, 61(2):225--247, 2002.

\bibitem{stadler2012polynomial}
Tanja Stadler and James~H Degnan.
\newblock A polynomial time algorithm for calculating the probability of a
  ranked gene tree given a species tree.
\newblock {\em Algorithms for Molecular Biology}, 7(1):7, 2012.

\bibitem{SteelMcKenzie}
Mike Steel and Andy McKenzie.
\newblock Properties of phylogenetic trees generated by {Yule}-type speciation
  models.
\newblock {\em Mathematical Biosciences}, 170(1):91--112, 2001.

\bibitem{takahata1989gene}
Naoyuki Takahata.
\newblock Gene genealogy in three related populations: consistency probability
  between gene and population trees.
\newblock {\em Genetics}, 122(4):957--966, 1989.

\bibitem{takahata1985gene}
Naoyuki Takahata and Masatoshi Nei.
\newblock Gene genealogy and variance of interpopulational nucleotide
  differences.
\newblock {\em Genetics}, 110(2):325--344, 1985.

\bibitem{tavare1984line}
Simon Tavar{\'e}.
\newblock Line-of-descent and genealogical processes, and their applications in
  population genetics models.
\newblock {\em Theoretical Population Biology}, 26(2):119--164, 1984.

\bibitem{TavarePersComm}
Simon Tavar{\'e}.
\newblock Personal communication, 2017.

\bibitem{STELLS}
Yufeng Wu.
\newblock Coalescent-based species tree inference from gene tree topologies
  under incomplete lineage sorting by maximum likelihood.
\newblock {\em Evolution}, 66(3):763--775, 2012.

\bibitem{CompactCH}
Yufeng Wu.
\newblock An algorithm for computing the gene tree probability under the
  multispecies coalescent and its application in the inference of population
  tree.
\newblock {\em Bioinformatics}, 32(12):i225--i233, 2016.

\bibitem{yang_rannala2014}
Ziheng Yang and Bruce Rannala.
\newblock Unguided species delimitation using {DNA} sequence data from multiple
  loci.
\newblock {\em Molecular Biology and Evolution}, 31(12):3125--3135, 2014.

\bibitem{yu2012probability}
Yun Yu, James~H Degnan, and Luay Nakhleh.
\newblock The probability of a gene tree topology within a phylogenetic network
  with applications to hybridization detection.
\newblock {\em PLoS Genet}, 8(4):e1002660, 2012.

\end{thebibliography}

\vspace{3cm}

\appendix


\begin{figure}[h!]
\centering
\begin{minipage}{.43\textwidth}
\centering
  \includegraphics[width=0.7\linewidth]{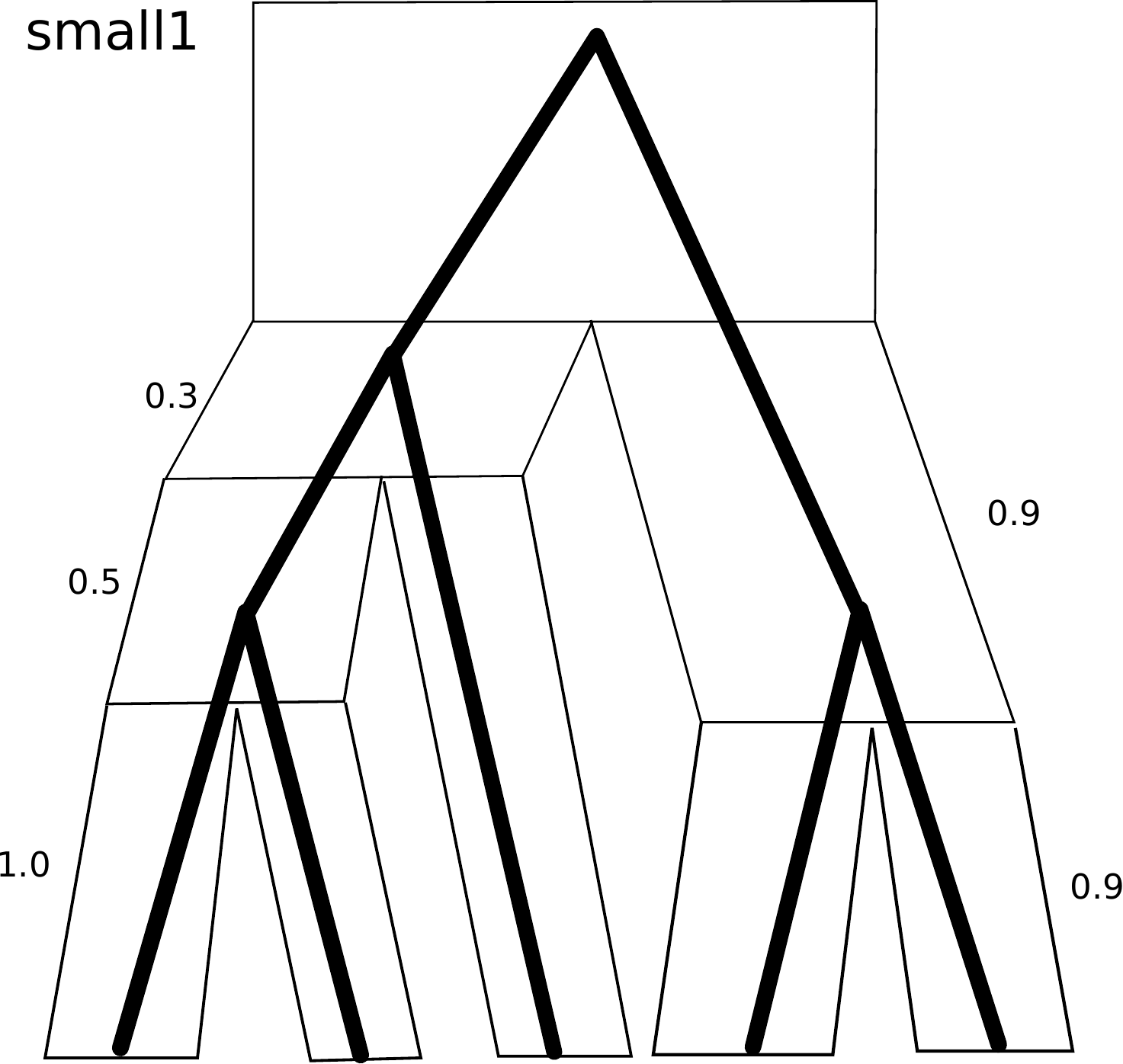}
  \\
  \vspace{0.5cm}
  \includegraphics[width=1.0\linewidth]{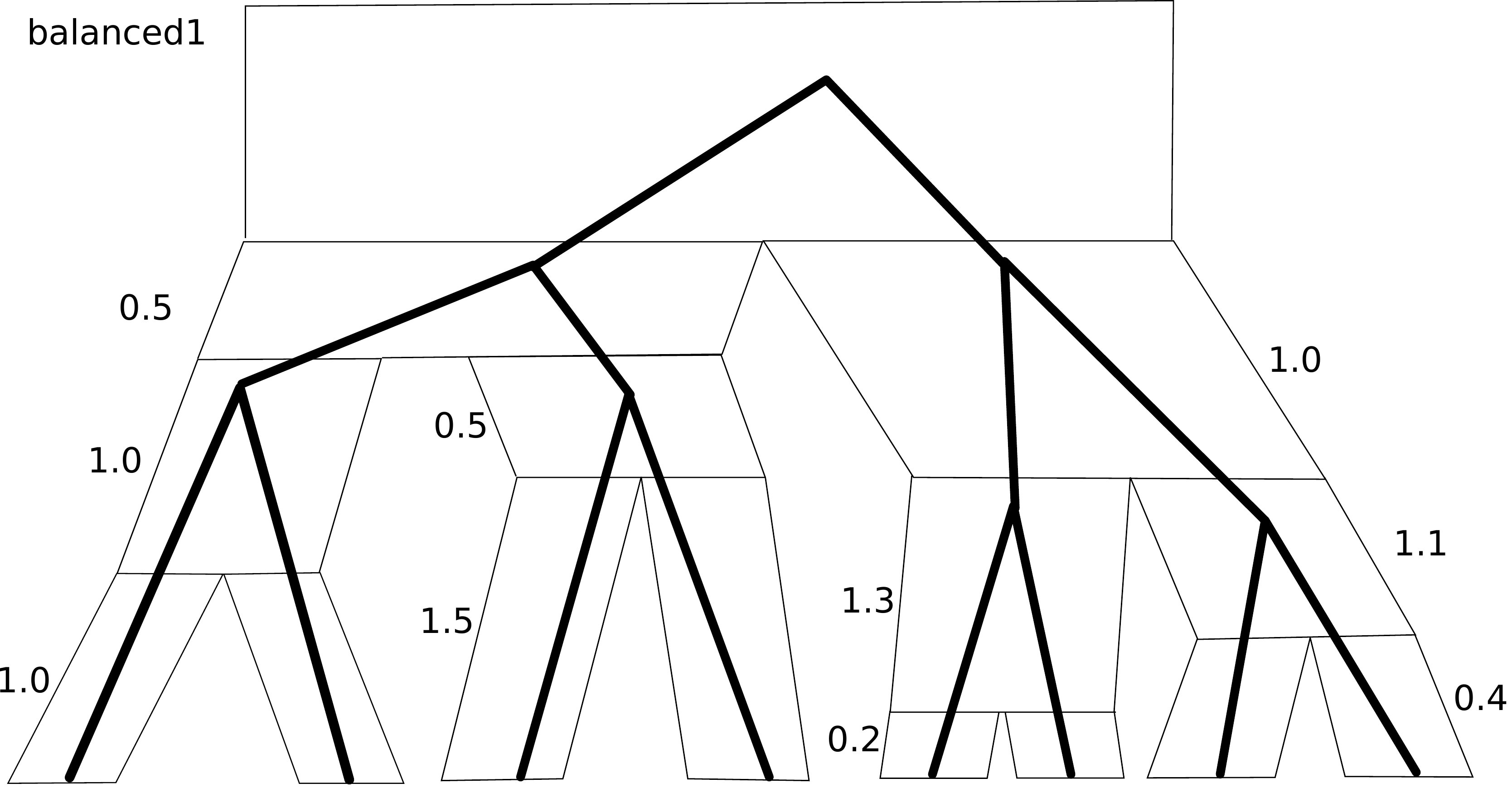}
  \\
  \vspace{0.5cm}
  \includegraphics[width=1.0\linewidth]{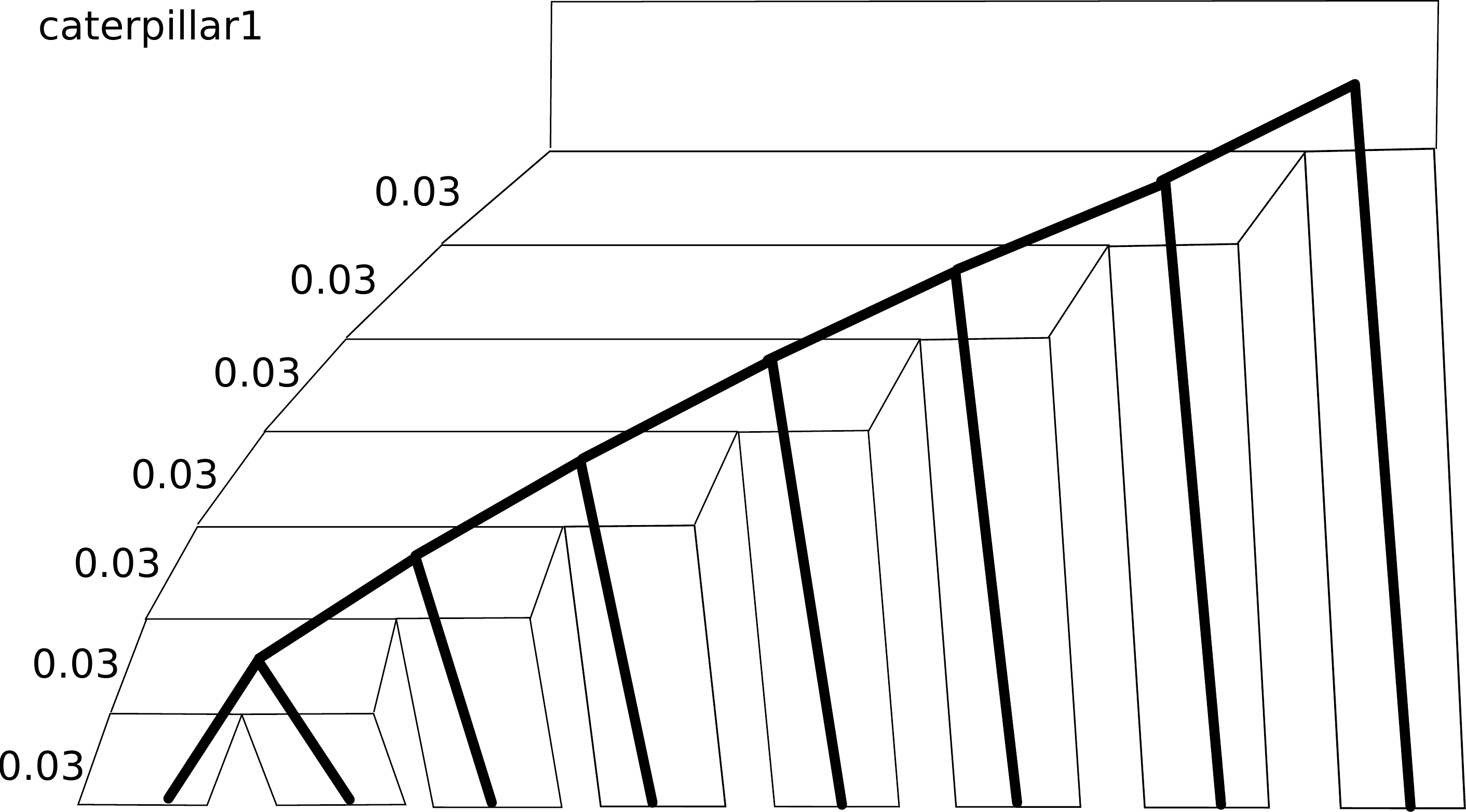}
\end{minipage}%
\hspace{0.5cm}
\begin{minipage}{.43\textwidth}
\centering
  \includegraphics[width=0.7\linewidth]{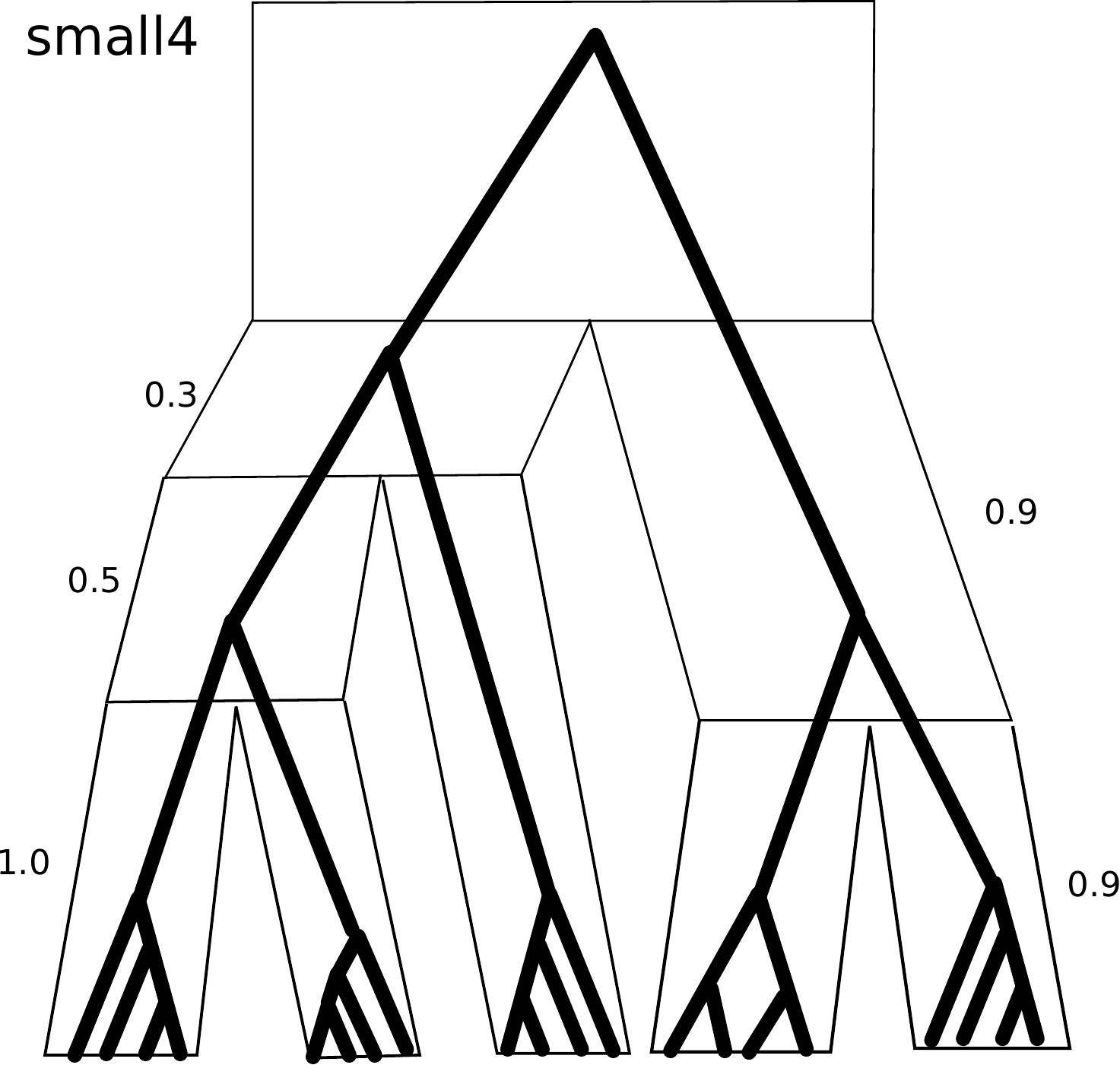}
  \\
  \vspace{0.5cm}
  \includegraphics[width=1.0\linewidth]{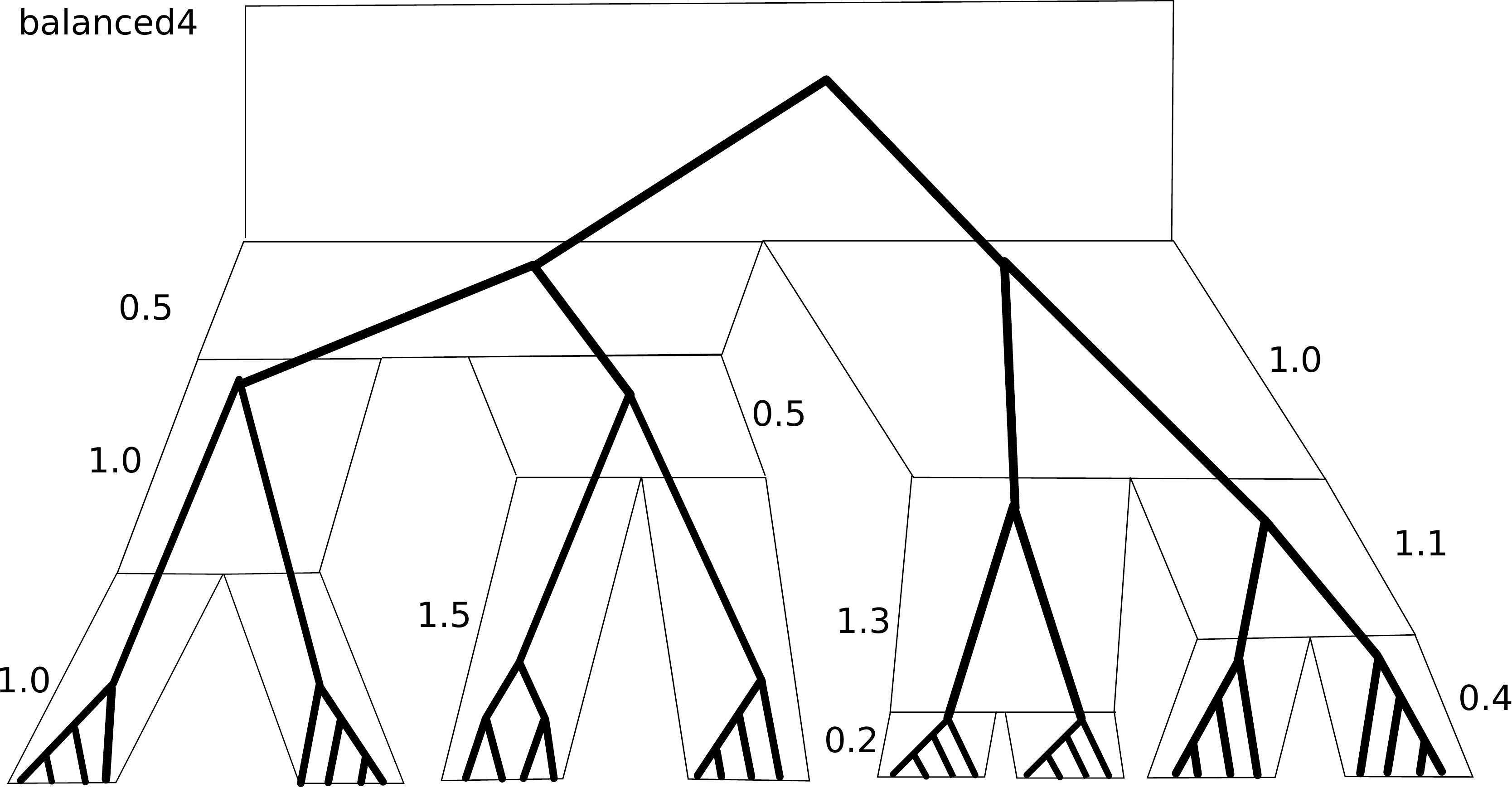}
  \\
  \vspace{0.5cm}
  \includegraphics[width=1.0\linewidth]{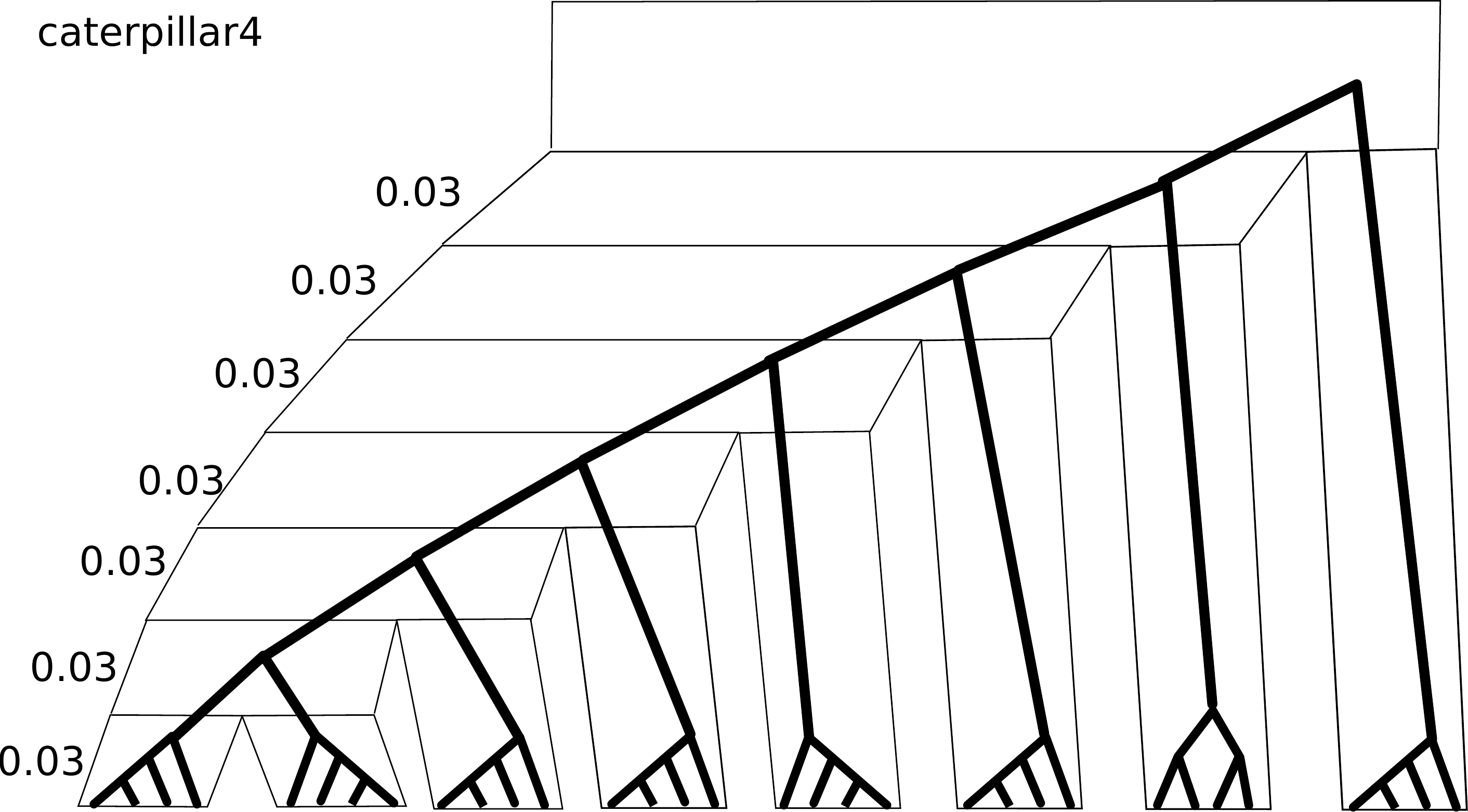}
\end{minipage}%
\caption{Data sets used in Table~\ref{tab-examples}. For each of the $3$ species trees, we generated two monophyletically concordant gene tree topologies with $1$ and $4$ genes in each species, respectively. Species tree branch lengths (in coalescent units) are indicated to the side of each species/parallelogram. All species trees are ultrametric, that is, all leaves are equidistant from the root.} 
\label{fig-examples}
\end{figure}

\clearpage

\section{Appendix: illustrative examples}\label{sec-examples}

To compare the behaviour of \textsc{gtprob} against that of STELLS and the gene-tree probability algorithm in the
\textsc{PhyloNet} package~\cite{yu2012probability}, we ran these programs on a few illustrative data sets, shown in Figure~\ref{fig-examples}.
The data sets {are obtained from} three species trees, where for each species tree we have generated gene trees with $1$ or $4$ genes sampled from each species. 
The results {for the resulting 6 data sets}, shown in Table~\ref{tab-examples}, demonstrate that the algorithms give virtually the same gene tree probabilities.
The higher discrepancy between STELLS and \textsc{gtprob} may be partially due to STELLS reporting log-probabilities 
with 10 significant digits, while the other two algorithms report about 16 significant digits.

\begin{table}[h]
    \centering
    \footnotesize
    \hspace*{-7mm}\begin{tabular}{|c|c|c||c|c|c||c|c|c|}
    Data set & sp. & g./sp. & \multicolumn{3}{c||}{Log-probability comparison} & \multicolumn{3}{c|}{Time (s)} \\
    \hline
    & & & \textsc{gtprob} & vs.~STELLS & vs.~\textsc{PhyloNet} & \textsc{gtprob} & STELLS & \textsc{PhyloNet} \\
    \hline
    small1 & 5 & 1 & $-1.644212722576583$ & $4 \cdot 10^{-9}$ & $2\cdot 10^{-14}$ & 0.34 & 0.00 & 1.08 \\
    small4 & 5 & 4 & $-17.814933301282963$ & $4\cdot 10^{-8}$& $4\cdot 10^{-14}$ & 0.50 & 0.07 & 2.03 \\
    balanced1 & 8 & 1 & $-7.518309941826544$ & $8 \cdot 10^{-9}$ & $3\cdot 10^{-14}$ & 0.49 & 0.00 & 1.17 \\
    balanced4 & 8 & 4 & $-61.25728682763628$ & $8\cdot 10^{-8}$& $5\cdot 10^{-14}$ & 0.56 & 6.01 & 43.61 \\
    caterpillar1 & 8 & 1 & $-11.822356365959706$ & $4 \cdot 10^{-8}$ & $6\cdot 10^{-14}$ & 0.43 & 0.00 & 1.19 \\
    caterpillar4 & 8 & 4 & $-67.26600152592931$ & $8\cdot 10^{-8}$& $1\cdot 10^{-11}$ & 0.55 & 1065.03 & 1164.64 \\
    \end{tabular}
    \caption{A comparison of different algorithms on the data sets in Figure \ref{fig-examples}.\label{tab-examples}
    Column ``sp.'' reports the number of species in the species tree, while column ``g./sp.'' reports the number of genes sampled per species.
    We report the log-probability returned by \textsc{gtprob} and the magnitude of the difference with log-probabilities returned by STELLS and \textsc{PhyloNet}, as well as running times. In data sets balanced1 and balanced4, the species tree is fully balanced. In data sets caterpillar1 and caterpillar4, the species tree is a caterpillar tree.}
\end{table}

Our algorithm is substantially faster than the other two approaches for data sets with
larger gene trees (balanced4 and caterpillar4).
The strong impact of the size of the gene tree on the running times of STELLS and \textsc{PhyloNet} is not surprising,
as each coalescence between lineages sampled from the same species $s$ can occur in any of the ancestors of $s$, leading to
a rapid increase in the number of evolutionary scenarios that these algorithms must enumerate.
This effect is particularly pronounced for the final data set, where the species tree is a caterpillar
(and where a given coalescence can occur in up to 8 different possible populations).
Comparing data sets balanced4 and caterpillar4 also shows that approaches based on enumeration can have very 
variable running times even for equally-sized trees. 
The running time of \textsc{gtprob} does not appear to be strongly impacted by the shape of the species tree.

In Section~\ref{sec:runtime}, we have remarked that our algorithm is likely to be faster for balanced species trees 
than for unbalanced trees in the limit of large data sets. For small data sets such as the ones in Table~\ref{tab-examples}, 
this effect may not discernible, but it is important to note that for larger data sets, this effect may at most increase by 1 the degree of the polynomial
bound on the running time of \textsc{gtprob}. Approaches based on enumeration, on the other hand, can go from being polynomial-time
for some tree shapes, to being exponential-time (or worse \cite{disanto2015coalescent}) for other tree shapes.
These observations are consistent with the differences in the variance of running times mentioned in Section~\ref{sec:runtime_moderate}.


\end{document}